\declaretheorem[style=plain]{theorem}
\declaretheorem[style=plain,sibling=theorem]{corollary}
\declaretheorem[style=plain,sibling=theorem]{lemma}
\declaretheorem[style=plain,sibling=theorem]{proposition}
\declaretheorem[style=definition,sibling=theorem]{definition}
    \DeclareMathOperator{\Ima}{Im} 
\begin{document}

\title{Quantum Walks: First Hitting Times with Weak Measurements}

\author{Tim Heine}
\email{tim.heine@dlr.de}
\affiliation{Institute of Quantum Technologies, German Aerospace Center (DLR), Wilhelm-Runge-Str. 10, D-89081 Ulm, Germany}

\author{Eli Barkai}
\affiliation{Department of Physics, Institute of Nanotechnology and Advanced Materials, Bar-Ilan University, Ramat-Gan, 52900, Israel}

\author{Klaus Ziegler}
\affiliation{Institut für Physik, Universität Augsburg, D-86135 Augsburg, Germany}

\author{Sabine Tornow}
\affiliation{Research Institute CODE, Universität der Bundeswehr München, D-81739 Munich, Germany}

\date{\today}

\begin{abstract}
We study the first detected recurrence time problem of continuous-time quantum walks on graphs. While previous works have employed projective measurements to determine the first return time, we implement a protocol based on weak measurements on a dilated system, enabling minimally invasive monitoring throughout the evolution.
To achieve this, we extend the theoretical framework and complement it with both numerical simulations and experimental investigations on an IBM quantum computer. Despite the implementation of a generalized measurement, our modified formalism of weak recurrence provides a description purely within the Hilbert space of the quantum system. Our results reveal that the first hitting time scales inversely with the coupling parameter between the ancilla and the quantum system.
\end{abstract}

\maketitle

\section{\label{section:Introduction}Introduction}
%
%
Quantum random walks are universal for quantum computation with demonstrated exponential speed-up over random walks \cite{childs2003exponential,kempe2003quantum, kempe2005discrete}. Being the quantum counterpart of classical random walks, quantum walks find applications in stochastic processes \cite{whitfield2010quantum}, quantum search algorithms \cite{shenvi2003quantum}, quantum computation \cite{childs2009universal}, and many more \cite{qiang2024review}.  
%
%
%
\textit{Quantum recurrence} indicates the ability of a quantum walker to return to its initial location \cite{nitsche2018probing} with probability one.  
%
%
%
%
%
%
The \textit{first hitting time} of a quantum walk characterizes the expected time, which a quantum particle takes to evolve through a lattice and \textit{hit} its initial site for the first time. The phrases \textit{hitting time, return time,} and \textit{recurrence time} are used synonymously. Classical first hitting times on graphs \cite{grebenkov2024target, redner2001guide} are utilized for instance in cellular biology \cite{iyer2016first}, and their quantum counterpart finds applications in quantum chemistry \cite{kim1958mean} and quantum networks \cite{chiribella2009theoretical}.
%
%
%
%
%

To detect the position of a walker obeying the laws of quantum theory, one has to measure the system. Axiomatically, a quantum mechanical measurement intercepts the evolution and hence \textit{changes} the quantum state. Thus, a \textit{measurement-indu\-ced} quantum walk refers to a protocol, where a quantum system evolves unitarily for some time and is repeatedly measured. This repeated measurement procedure is called \textit{monitoring} and there is no unique way to specify such a protocol \cite{varbanov2008hitting}.
The resulting dynamics depends sensitively on the type of monitoring. 
There are two canonical choices for choosing the \textit{rate} at which the system is measured. 
First, one can measure stroboscopically at a constant rate. This means that the system is \textit{always} measured at constant frequency.  
Within this protocol, any quantum particle in a finite-dimensional system will \textit{definitely} return to its initial state after a finite time period \cite{grunbaum2013recurrence}. Hence, the first hitting time provides information about size, complexity and connectivity of a quantum system.  
Second, after a fixed time interval, loosely speaking, one \textit{flips a coin,} whose outcome decides whether the quantum system is measured or being let free to evolve \cite{varbanov2008hitting, kessler2021first}.
We will focus on the first protocol, i.e. we repeat the measurement of a tight binding continuous-time quantum walk at a fixed, constant rate. 

In quantum computer science, an elementary algorithm consists of two basic components: unitary evolution and measurement. However, there a two conceptual approaches in designing quantum circuits: \textit{Static} circuits, where the measurement operation is executed at the \textit{end} of a unitary sequence and \textit{dynamic} circuits \cite{corcoles2021exploiting, baeumer2024quantum}, which employ measurements within the \textit{middle} of the sequence. In light of quantum circuit design, the monitored quantum walk represents a minimal version of a dynamic circuit.
Quantum mechanically, a \textit{strong} measurement refers to an idealized, projection-valued measurement, with no further disturbance to the system.
Regarding a quantum algorithmic framework, where the unitary operator acts on a (finite) register of qubits, a measurement is often realized by additional \textit{ancilla} qubits that are utilized for \textit{weak} measurements. 
In practice, these measurements affect the quantum dynamics via less disturbance than strong, projection-valued measurements, which classifies these measurements as \textit{weak.} Weak measurements indicate that the interaction between the system and the measurement apparatus is sufficiently small, resulting in limited information gain per measurement and minimal disturbance to the system. This may manifest as \textit{increased uncertainty} in the measurement outcomes, which can resemble additional randomness.  
This increased uncertainty is induced by the parameterized coupling strength between the two systems, which affects the outcome of the measurement.
Thus, the (weak) first hitting time indicates an \textit{expected number of (weak) measurements} which it takes for the particle to be detected for the first time. This must not to be confused with a physical time. 
Hence, the event of first detection is a sequence of binary outcomes, dependent on whether the quantum walker (i.e. a quantum particle) has been detected "yes", or not "no" \cite{yin2023restart}. In case that the detector clicks ("yes", click measurement), the protocol is terminated. If the detector does not click ("no", no-click measurement), the protocol is continued while for weak measurements the ancilla qubit influences the system qubit \cite{Dubey_2023}.

First hitting times for the continuous-time quantum walk with generalized weak measurements were already studied by Varbanov et al. \cite{varbanov2008hitting}. Their protocol differs from our weak measurement protocol in the way that we always measure (stroboscopic monitoring), once the unitary evolution is stopped. Their protocol includes a third event which corresponds to letting the quantum system evolve freely, that is, no measurement at all. 
The fundamental theoretical framework, which is extensively utilized in this work, was developed by Grünbaum et al. \cite{grunbaum2013recurrence}.
There, the authors derived a topologically protected, spectral characterization of quantum recurrence for discrete-time unitaries based on properties of Schur functions \cite{schur1917potenzreihen}, which was extended by Bourgain et al.  \cite{bourgain2014quantum} and generalized to open quantum walks \cite{grunbaum2018generalization, grunbaum2020quantum}. 
Experimentally, first hitting times were already evaluated on a quantum computer through a strong, projective measurement protocol in \cite{measurement2023tornow, wang2024first, yin2025restart}. 
%
%
%
%

In this work, we compute the expected first return time of a weakly monitored continuous-time quantum walk thr\-ough graphs by applying weak measurements stroboscopically.
The\-se measurements are implemented by a controlled rotation through one additional ancilla qubit around the Pauli-Y axis. The return time is computed in dependence of the coupling strength between the ancilla qubit and the quantum system. Although such a protocol is rigorously described by generalized operator-valued measurements, the specific choice of the rotation enables a description within vectors in Hilbert spaces.
Hence, we can map such a protocol to the Hilbert space framework of quantum recurrence \cite{grunbaum2013recurrence, bourgain2014quantum}, and analyze the effect of the coupling strength. 
Therefore, our description avoids the system's dilation and builds upon fundamental theory developed in \cite{grunbaum2013recurrence,bourgain2014quantum,grunbaum2018generalization,grunbaum2020quantum,cedzich2016bulk,cedzich2018topological,cedzich2019quantum}.
Moreover, we experimentally demonstrate our modified measurement-induced first hitting time on an IBM quantum computer underscoring the recurrence theory's resilience to weak couplings. 

The structure of this paper is outlined as follows.
At first, we recapture monitored quantum walks through projection-valued measurements in Section \ref{section:Monitored_Quantum_Walks}. Secondly, we propose our weak monitoring protocol in Section \ref{section:Indirect_Measurements}. Thirdly, the weakly monitored first hitting time is analyzed in Section \ref{section:Indirect_First_Hitting_Time}, and finally, we propose further research directions in Section \ref{section:Conclusions}.

\section{\label{section:Monitored_Quantum_Walks}Monitored Quantum Walks}
The stochastic process under investigation is a contin\-uous-time quantum walk of a single quantum state evolving through a finite graph.
The unitary evolution is generated by a time independent Hamilton operator which is given by the adjacency matrix of the graph.
Let $H$ denote that adjacency matrix. The corresponding quantum walker evolves unitarily according to the Schrödinger equation via
\begin{equation}\label{eq:continuous_time_unitary}
    U(t)=e^{-iH t},
\end{equation}
such that the state of the system at time $t$ is given by $\ket{\psi(t)} = U(t) \ket{\psi(0)}$, for some known initial state $\ket{\psi(0)}.$ The \textit{continuity} of the map $t\mapsto e^{-iHt}$ gives the \textit{continuous-time quantum walk} its name.
If the unitary evolution is interspersed with measurements, the resulting quantum walk is called \textit{monitored quantum walk (MQW)} \cite{varbanov2008hitting}. If the unitary is chosen to be of the form (\ref{eq:continuous_time_unitary}), one calls the resulting protocol a \textit{monitored continuous-time quantum walk (MCTQW).}
\subsection{State Recurrence and Subspace Recurrence}
We consider the first detected return time problem \cite{wang2024first, friedman2017quantum}. Contrary to classical random walks, where the observer can detect the position of the walker without interfering with the system's dynamics, a quantum walk must be subjected to repeated measurements.
We will monitor the quantum system stroboscopically for $N$ times at a constant rate. The real-valued time interval, at which the quantum system evolves unitarily in between two consecutive measurements is denoted by $t$. The resulting stochastic process is a Markov chain \cite{grunbaum2013recurrence}. For example, the event that the particle has been firstly detected after the third measurement, given that it was not detected twice beforehand, is written "no-no-yes" or simply "001".

Throughout, we will follow the notation by Bourgain et al. \cite{bourgain2014quantum}, and define for an initial quantum state $\ket{\psi}$, a subspace $V\subset \mathcal{H}$, which will be repeatedly measured. The measurement operator, which describes a projection onto that subspace, is denoted by $P$. Hence, for $\ket{\psi}\in V$,  $V:=\Ima P$ denotes the image of the projection operator, which is a closed subspace in $\mathcal{H}.$ The set of quantum states on $V$ is defined as \cite{bourgain2014quantum}
\begin{equation*}
    S_V := \left\{ \ket{\psi}\in V : \norm{\ket{\psi}} = 1 \right\}.
\end{equation*}
Let $Q:=\mathds{1}-P$ be the complementary projector onto $V^{\perp}:= \Ima Q.$
For shorter notation, we will sometimes denote $U(t)$ as $U_t$.
The operators $PU_t$ and $QU_t$ are referred as \textit{termination} and \textit{survival operator.} This means that a successful measurement (i.e. with $P$) \textit{terminates} the protocol, whereas an unsuccessful measurement lets the walker \textit{survive.} 
To ensure that an initial state $\ket{\psi}$ is contained in $V$, Bourgain et al. \cite{bourgain2014quantum} defined the linear \textit{first return amplitude operator} $\hat{a}_n(t)$ to be
\begin{equation}\label{eq:first_return_amplitude_op}
    \Hat{a}_n(t):= PU_t \tilde{U_t}^{n-1}P, \qquad \Tilde{U_t}:= Q U_t
\end{equation}
\subsubsection{First Hitting Time}
For a stochastic process with initial state $\ket{\psi}\in S_V$, the \textit{detection probability} $p_n\equiv p_n(t)$ is referred to the probability to detect the quantum walker after $n\in\mathbb{N}$ steps \cite{bourgain2014quantum}
\begin{equation}\label{return_probability}
    p_n(t):= \norm{\Hat{a}_n(t)\ket{\psi}}^2.
\end{equation}
Then, the overall return probability $R_N$ is defined as the total probability that the process \textit{returns} to its initial state after $N$ steps, i.e. 
\begin{equation*}
    R_N:= \sum_{n=1}^N p_n, \qquad R_{\infty}:= \lim_{N\to\infty} R_N. 
\end{equation*}
Since the experiment on a quantum computer requires a protocol with \textit{finite} circuit depth $N$, we study any involved stochastic quantity also with regard to a finite sample space. 
Note that $R_N$ technically depends on the time $t$ and the choice of the initial state $\ket{{\psi}}\in S_V$.  Here, $\norm{\cdot} \equiv \sqrt{ \langle \cdot, \cdot \rangle }$ denotes the usual Hilbert space norm. 
The \textit{expected return time} $\tau_N$ refers to the \textit{expected number of measurements,} which it takes for the quantum walker to return to its initial subspace. The quantity $\tau_N$ becomes a \textit{physical} time, if we multiply it with $t$ so that $t_N: = \tau_N t$ represents the \textit{physical expected first return time.} $t_N$ and $\tau_N$ are proportional by $t$ so that we restrict our analysis to $\tau_N$.

For an initial state $\ket{\psi}\in S_V$, $\tau_N \equiv \tau_N(t)$ is the \textit{expectation value} of the first return event, which is given by \cite{bourgain2014quantum}
\begin{equation}\label{return_time}
    \tau_N(t):=  \frac{\sum_{n=1}^N n \, p_n(t)}{R_N(t)}.
\end{equation}
Note that
$ \tau_{\infty}:= \lim_{N\to\infty} \tau_{N} = \lim_{N\to\infty}\sum_{n=1}^N n p_n, $
because here $\lim_{N\to\infty} R_N = 1.$
Under the assumption that $\mathcal{H}$ is finite-dimensional, any closed subspace $V$ is automatically V-recurrent \cite{bourgain2014quantum, grunbaum2018generalization}. 
\subsection{Generalized Quantum Measurements}
In Sec. \ref{section:Indirect_Measurements}, we will consider the first hitting time with weak measurements. Here, we repeat the basic background for operator-valued measurements. 

Generalizing the quantum measurement from projection-valued measurements (PVMs) on a Hilbert space to positive operator-valued measurements (POVMs) on a dilated system is the unique mathematical construction of modeling imperfect quantum measurements \cite{von2018mathematical,nielsen2010quantum}. Those quantum walks, which are measured via an interaction with an ancilla system (as we consider it in this paper) are classified as \textit{open quantum walks} \cite{grunbaum2018generalization}. 
General mixed quantum states are described by density operators, i.e. Hermitian, positive semi-definite operators with trace one. 
A sub-normalized density operator is a density operator with trace less or equal to one. 
In quantum information, any general CPTP channel, i.e. a \textit{completely positive} (CP) and \textit{trace preserving} (TP) linear map is of the form 
\begin{equation}\label{eq:quantum_channel}
  \mathcal{E}(\rho)=\sum_{i=1}^n K_i \rho K_i^{\dagger},
\end{equation}
with $K_i$ being \textit{Kraus operators} and  the effect operators $E_i := K_i K_i^{\dagger}$ are positive with $0 \leq E_i \leq \mathds{1}$ and satisfy the completeness relation $\sum_{i=1}^n E_i = \mathds{1}.$
The first return amplitude operator from (\ref{eq:first_return_amplitude_op}) becomes a subnormalized \textit{first return channel}, which is \cite{grunbaum2018generalization}
\begin{equation}\label{eq:first_return_channel}
    A_n (\rho):= \Tilde{a}_n\, \rho\, \Tilde{a}_n^{\dagger},\quad \forall\, n\geq 1.
\end{equation}
with $\Tilde{a}_n \equiv \Tilde{a}_n(t)$ similar to (\ref{eq:first_return_amplitude_op}), but now modified to a generalized measurement, which reads
\begin{equation}\label{eq:first_return_povm}
    \Tilde{a}_n := K_{\text{yes}} U \Tilde{U}^{n-1}, \qquad \Tilde{U}:= K_{\text{no}} U,\qquad n \geq 1. 
\end{equation}
The probability of detecting the walker at the $n$th measurement, given that the particle is not detected $n-1$ times, is hence
\begin{equation}
    p_n = \Tr A_n(\rho).
\end{equation}
The Kraus operators $K_{\text{yes}}$ and $K_{\text{no}}$ indicate a binary measurement, which uniquely refers to the events of having detected the particle "yes" or not "no". The evolution in (\ref{eq:first_return_channel}) was already proposed by Krovi et al. \cite{krovi2006quantum}. 
Stinespring's dilation theorem states that any POVM is a PVM on a larger, \textit{dilated} space \cite{stinespring1955positive}. However, the converse does not hold true. In Section \ref{section:Indirect_Measurements}, we exactly proceed along such a converse direction.

%

The unitary evolution of a density operator $\rho\mapsto U\rho U^{\dagger}$ can be represented as a matrix-vector-multi\-plica\-tion on a space of squared dimension \cite{horn1994topics,grunbaum2020quantum}, as
\begin{equation}\label{eq:vectorization}
    vec \left( U \rho U^{\dagger} \right) = \left( U \otimes \overline{U} \right)\,vec \,(\rho).
\end{equation}
For a complex-valued, square matrix $A = (a_{ij})_{i,j=1}^n$, the bijective \textit{vectorization} mapping $vec (A): M_n(\mathbb{C})\to \mathbb{C}^{n^2}$ is $vec (a_{ij}) = a_{(i-1)n+j}$.
Therefore, one can use the spectral recurrence formalism \cite{grunbaum2013recurrence, bourgain2014quantum} subjected to operator-valued measurements of any quantum channel as discussed in \cite{grunbaum2018generalization, grunbaum2020quantum}.

\section{\label{section:Indirect_Measurements}Monitoring via Weak Measurements}
\subsection{Weak Measurements}
To the purpose of simplicity, we will describe first hitting times through weak measurements solely by using subnormalized pure states and projection-valued measurements. By doing so, we can utilize the spectral characterization of recurrence developed in \cite{grunbaum2013recurrence} for our approach of measuring weakly.  

If we are able to track the action of the POVM back to a modified \textit{pseudo-PVM} on the same (Hilbert) space of the quantum system (and not on the dilated system), we are able to describe the monitored protocol nevertheless through vectors on Hilbert spaces. For quantum recurrence, there already exists a comprehensive theory of both (scalar-valued) state recurrence \cite{grunbaum2013recurrence} and (operator-valued) subspace recurrence \cite{bourgain2014quantum, grunbaum2018generalization}. 
As we will see, our choice of weak measurements can be mathematically embedded into the framework of quantum state recurrence \cite{grunbaum2013recurrence}. 

Our choice of a weak measurement is operationally equivalent to an evolution through a \textit{quantum phase damping channel} \cite{nielsen2010quantum}. This channel shears the projection operators in dependence on the rotation angle between the ancilla and the quantum system.
Physically speaking, such a channel describes the event (ocurring with probability $\eta$) that a photon has been scattered without loss of energy \cite{nielsen2010quantum}. The corresponding Kraus operators are
\begin{equation}\label{eq:Kraus_operators}
     K_{\text{yes}}(\eta) = \begin{pmatrix}
        0 & 0 \\
        0 & \sqrt{\eta} \\
    \end{pmatrix},
    \quad 
     K_{\text{no}}(\eta) = \begin{pmatrix}
        1 & 0 \\
        0 & \sqrt{1-\eta} \\
    \end{pmatrix},
\end{equation}
It is emphasized that these operators satisfy the subsequent \textit{pseudo-completeness relation}
\begin{align}\label{eq:pseudo_complete_kraus}
    &K_{\text{yes}}(\eta) + K_{\text{no}}(\eta) = \mathds{1} + f(\eta) \begin{pmatrix}
        0 & 0 \\
        0 & 1 \\
    \end{pmatrix},
\end{align}
\begin{align}\label{eq:f_eta}
    \qquad &f(\eta) :=\sqrt{\eta}+\sqrt{1-\eta}-1. 
\end{align}
With \textit{pseudo-completeness} we mean that $K_{\text{yes}}$ and $K_{\text{no}}$ resolve the identity plus an $\eta-$ dependent perturbation by some diagonal operator. Here, that operator is just $\ketbra{1}{1}\equiv \begin{pmatrix}
        0 & 0 \\
        0 & 1 \\
    \end{pmatrix}$.
Relation (\ref{eq:pseudo_complete_kraus}) is essential for our protocol: in what follows, we map the operator-valued weak measurements (which are realized by a controlled $R_Y$ rotation on the ancilla system) to the projection-valued recurrence theory through rank-one projections developed in \cite{grunbaum2013recurrence} and extended in \cite{bourgain2014quantum}. Our approach is built upon (\ref{eq:pseudo_complete_kraus}), as it will become clear in Section \ref{subsubseq:weak_first_return_amplitude_operator}. However, any POVM measurement, which does not fulfill such a pseudo-completeness relation cannot be canonically embedded (without dilation) into a recurrence formalism for pure states. In general, the wider theoretical framework of \textit{open quantum walks} \cite{grunbaum2018generalization, grunbaum2020quantum} must be taken into account.
We consider the Hilbert space $\mathcal{H}$ of two composite systems, i.e. $\mathcal{H} = \mathcal{H}_{\text{sys}}\otimes \mathcal{H}_{\text{anc}}$. The first subsystem represents a register of $d$ qubits $\mathcal{H}_{\text{sys}}=\mathbb{C}^{2^d}$, which underlies a Hamiltonian evolution and the second subsystem is an ancilla qubit $\mathcal{H}_{\text{anc}}=\mathbb{C}^{2}$. 
We will identify one qubit out of that register as the control qubit, which is measured. 
The remaining $d-1$ qubits refer to the rest of the system. 
For simplicity, let us start with one system qubit ($d=1$) and one ancilla qubit so that $\mathcal{H}_{\text{2-qub.}} = \mathbb{C}^2\otimes\mathbb{C}^2$. For this minimal scenario, we recall an \textit{weak measurement} as follows \cite{koh2022experimental}:
\begin{definition}[Two-qubit null-type weak measurement]
Let $\eta\in (0,1]$. With $g(\eta):=\arcsin(\sqrt{\eta})$, the unitary operator on the ancilla qubit is defined by
\begin{equation*}
R_Y(\eta) = 
\begin{pmatrix}
  \cos g(\eta) & -\sin g(\eta) \\
  \sin g(\eta) & \cos g(\eta) \\
\end{pmatrix}.
\end{equation*}
Thus, the two-qubit controlled $R_Y$ rotation reads
\begin{align}
CR_Y(\eta) &= e^{-ig(\eta)\frac{(\mathds{1}-\sigma_z)}{2}\otimes\sigma_y} = 
\begin{pmatrix}
  \mathds{1} & 0 \\
  0 & R_Y(\eta) \\
\end{pmatrix}
\\
&\equiv 
\begin{pmatrix}
    1 & 0 & 0 & 0 \\
    0 & 1 & 0 & 0 \\
    0 & 0 & \cos g(\eta) & -\sin g(\eta) \\
    0 & 0 & \sin g(\eta) & \cos g(\eta) \\
\end{pmatrix}.
\end{align}
Applying projection operators 
\begin{equation}\label{eq:p_and_q}
P:= \begin{pmatrix}
        0 & 0 \\
        0 & 1 \\
    \end{pmatrix}, \qquad
    Q:= \begin{pmatrix}
        1 & 0 \\
        0 & 0 \\
    \end{pmatrix}
\end{equation}
to the ancilla qubit yields a weak measurement. The Kraus operators of this measurement are given in (\ref{eq:Kraus_operators}).
\end{definition}
Note that $\eta \mapsto g(\eta)$ parametrizes the angle of the $R_Y$ rotation, i.e. $g(\eta) \in (0,\frac{\pi}{2}]$.  Thus, measuring the ancilla qubit \textit{strongly} corresponds to measuring the control qubit \textit{weakly.} The setup is visualized in Fig. \ref{fig:quantum_circuit_first_return}.
\begin{figure*}
    \centering
    \includegraphics[width=\textwidth]{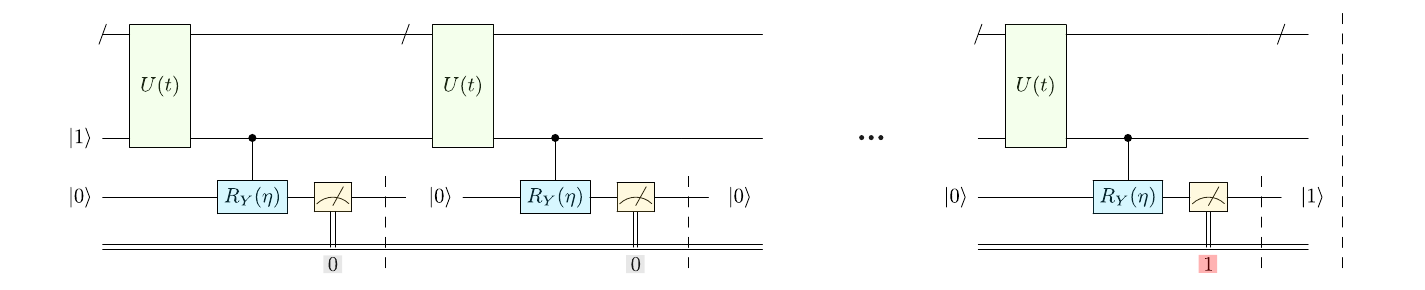}
    \caption{\justifying Quantum circuit of the first return event via weak measurements, with a renormalized rotation angle $\eta\in (0,1]$. The bit string of this measurement is $00..01$, where the $1$ occurs at position $n$ so that the circuit depth is $n\leq N$.}
    \label{fig:quantum_circuit_first_return}
\end{figure*}
W.l.o.g., we initialize the ancilla qubit in state $\ket{0}$ and the system qubit in state $\ket{1}$. One iteration in our weakly monitored quantum walk consists of three steps. At first, the unitary operator $U(t)$ is applied to the register of $d$ qubits. Secondly, the controlled $R_Y$ gate is coupled to the control qubit which is to be measured. The control qubit is part of the quantum system under investigation, where the ancilla qubit represents an \textit{environment,} through which the weak measurement is performed. Finally, the ancilla qubit is measured. If the measurement is successful (i.e. the system qubit was activated), the protocol is terminated. If the measurement was not successful, the ancilla qubit is reset to state $\ket{0}$, and the protocol continues in step one.

Hence, the measurement-induced quantum trajectory \cite{skinner2019measurement,gebhart2020topological,snizhko2021weak} on the ancilla qubit is a \textit{flip} from $\ket{0}$ to $\ket{1}$, with conditional resets performed along each unsuccessful measurement. Equivalently, this \textit{transition event} on the ancilla space $\mathcal{H}_{\text{anc}}$ corresponds to a \textit{return event} on a quantum subspace $V\subset\mathcal{H}_{\text{sys}}$. To illustrate this, recall the two-qubit CNOT gate acting on the (product) states $\ket{1}\ket{0}$ and $\ket{0}\ket{0}$:  
\begin{align*}
    \text{CNOT} (\ket{1}\ket{0}) &= \ket{1}\ket{1} \\
    \text{CNOT} (\ket{0}\ket{0}) &= \ket{0}\ket{0}. 
\end{align*}
If the ancilla qubit is in state $\ket{1}$, the particle has been detected (because the control qubit is also in $\ket{1}$). If the ancilla qubit is in state $\ket{0}$, the particle has not been detected (control qubit is in $\ket{0}$). In our implementation we use the parameterized, controlled $R_Y$ rotation $CR_Y(\eta)$. Depending on $\eta\in (0,1]$ this gate characterizes the entanglement between the control and the ancilla qubit. For $\eta=1$, it follows the same logic as the CNOT gate for the states considered above, i.e. 
\begin{align*}
     CR_Y(\eta=1) (\ket{1}\ket{0}) &= \ket{1}\ket{1} \\
    CR_Y(\eta=1)(\ket{0}\ket{0}) &= \ket{0}\ket{0}. 
\end{align*}
For general $\eta\in(0,1]$, the resulting states evolve as \cite{koh2022experimental}
\begin{align*}
&\left( \alpha \ket{0} + \beta \ket{1} \right) \ket{0} \rightarrow \\
&\left( \alpha \ket{0} + \beta \sqrt{1 - \eta} \ket{1} \right) \ket{0} + \beta \sqrt{\eta} \ket{1} \ket{1},
\end{align*}
with details given in Prop. \ref{prop:state_after_measurement}. Since the ancilla state remains always in $\ket{0}$ after every unsuccessful measurement, there is no additional mathematical constraint for the reset operation. 
By measuring the ancilla qubit, one gains the desired information about the control qubit, depending on the coupling strength $\eta$. $\eta=0$ indicates a full separation between the two qubits, i.e. no information at all (we henceforth exclude $\eta=0$ from the protocol). $\eta=1$ means that the two qubits are maximally entangled. Thus, the $\eta=1$ scenario is equivalent to taking \textit{strong} measurements, as developed in \cite{grunbaum2013recurrence, bourgain2014quantum}. In the subsequent mathematical formulation, we omit the ancilla qubit (and hence omit doubling the dimension of the Hilbert space), but work with linear shearings of the projection operators on the non-dilated Hilbert space $\mathcal{H}_{\text{sys}}$.

These shearings are seen as follows. By defining 
\begin{equation}\label{eq:peta_and_qeta}
P_{\eta}:= K^-(\eta), \qquad
    Q_{\eta}:=K^+(\eta),
\end{equation}
and writing $P_{\eta}=\sqrt{\eta}P$, $Q_{\eta} = Q+\sqrt{1-\eta}P$, with $P,Q,$ as in (\ref{eq:p_and_q}),
the Kraus operators for phase damping (\ref{eq:Kraus_operators}) are the non-dilated linear transformation of the projection operators:
   \begin{equation}\label{eq:shear_projectors}
    \begin{pmatrix}
        P_{\eta} \\
        Q_{\eta}
    \end{pmatrix}
     = 
     \begin{pmatrix}
        \sqrt{\eta} & 0 \\
        \sqrt{1-\eta} & 1
    \end{pmatrix}
     \begin{pmatrix}
        P \\
        Q
    \end{pmatrix}.
    \end{equation}
Note that such a one to one mapping between operator-valued and projection-valued measurements does not exist for other quantum channels \cite{nielsen2010quantum}.
Due to the non-projection-valued measurement, it is emphasized that $P_{\eta} + Q_{\eta} \neq \mathds{1}$, but rather (see Fig. \ref{fig:one_over_eta_half_circ_survival_ops} (b))
\begin{equation}\label{eq:pseudo_orthogonal}
P_{\eta}+Q_{\eta}=\mathds{1}+f(\eta)P,
\end{equation}
with $f(\eta)$ as in (\ref{eq:f_eta}).
This two-qubit measurement can be embedded into any register of $d$ qubits as long as one control qubit is marked.
\subsection{Weak First Return Event}
\subsubsection{Weak First Return Amplitude Operator}\label{subsubseq:weak_first_return_amplitude_operator}
We analyze the first hitting time depending on $\eta\in(0,1]$
by applying the spectral description of quantum recurrence \cite{grunbaum2013recurrence,bourgain2014quantum, grunbaum2018generalization, grunbaum2020quantum} to weak measurements. 
It is emphasized that $P_{\eta}$ and $Q_{\eta}$ characterize a binary measurement, as the small Proposition \ref{prop:yes_no_condition} shows. The logical negation of this result indicates that whenever there is a $\eta$-dependent, non-zero overlap with $S_V$, we have detected the particle and the protocol is terminated. Otherwise, the measurement process is continued.   

By applying the modified measurement operators from (\ref{eq:peta_and_qeta}) to the \textit{survival} and \textit{termination} operator, we modify the first return amplitude operator from (\ref{eq:first_return_amplitude_op}) as follows:
\begin{definition}
    Let $\ket{\psi}\in S_V$. We define the \textit{survival operator} as $\Tilde{U}_{\eta}(t):=Q_{\eta}U_t \equiv \left(Q + \sqrt{1-\eta} P\right)U_t$ and the \textit{termination operator} as $P_{\eta}U_t$. This allows us to define the \textit{weak first return amplitude operator} as
    \begin{equation}\label{eq:weak_first_return_amplitude_op}
        \Hat{a}^{\eta}_{n}(t):= P_{\eta} U(t) \Tilde{U}_{\eta}(t)^{n-1} P
    \end{equation}
\end{definition}
Consequently, the \textit{first detection probability} (\ref{return_probability}) is modified to a \textit{$\eta-$dependent first detection probability} 
\begin{equation}\label{eq:eta_return_probability}
    p_n(\eta,t):= \norm{\Hat{a}_n^{\eta}(t) \ket{\psi}}^2, 
\end{equation}
with a \textit{total detection probability}
\begin{equation}\label{eq:total_return_eta}
R_N(\eta, t) = \sum_{n=1}^{N} p_n(\eta,t),    
\end{equation}
and with $\Hat{a}_n^{\eta}(t)$ as in (\ref{eq:weak_first_return_amplitude_op}). Note that the projector $P$ on the right side of Eq. (\ref{eq:weak_first_return_amplitude_op}) establishes that $\ket{\psi}\in \text{Im} \,P.$ Do not confuse it with a weak measurement in the zero'th step. 
This total detection probability satisfies
\begin{align}\label{eq:return_prob_eta}
\begin{split}
    R_N(\eta,t) &= \sum_{n=1}^{N} p_n(\eta, t) = \sum_{n=1}^{N}  \norm{\Hat{a}^{\eta}_n(t)\ket{\psi}}^2 \\
    &= 1 - \norm{\Tilde{U}_{\eta}(t)^N\ket{\psi}}^2,
\end{split}
\end{align}
from which one concludes that 
\begin{equation}\label{eq:total_detection_prob}
    \lim_{N\to\infty}R_N(\eta, t) = 1.
\end{equation}
Details are shown in Corollary \ref{cor:total_probability_weak} and further details can be found in Appendix
\ref{subseq:appendix_weak_measurements}.
\subsubsection{First Hitting Time after $N$ Weak Measurements}
Similarly to the above weak first return amplitude operator (\ref{eq:weak_first_return_amplitude_op}) and return probability (\ref{eq:eta_return_probability}), we have for the \textit{weak first hitting time} after measuring maximally $N$ times:
\begin{equation}\label{eq:weak_first_return_time_N}
     \tau_N(\eta,t):=  \frac{\sum_{n=1}^N n \, p_n(\eta, t)}{R_N(\eta, t)}
\end{equation}
By employing the telescoping structure of the sum, we can compute a simpler expression of (\ref{eq:weak_first_return_time_N}) similar to \cite{bourgain2014quantum}, which reads
\begin{equation}\label{eq:weak_first_return_time_N_telescoping}
        \tau_N(\eta,t)=\frac{\sum_{k=0}^{N-1}\norm{\Tilde{U}_{\eta}(t)^k\ket{\psi}}^2 - N \norm{\Tilde{U}_{\eta}(t)^N\ket{\psi}}^2}{1-\norm{\Tilde{U}_{\eta}(t)^N\ket{\psi}}^2}.
\end{equation}
Details are presented in Lemma \ref{lem:return_prob_and_time} and complemented by Corollary \ref{cor:return_prob_estimates} and \ref{cor:return_time_estimates}.
Intuitively, these statements might be obvious. The weaker we measure, the smaller is the return probability and consequently the longer it will take to detect the quantum walker.
\subsubsection{First Hitting Time after Infinitely Many Measurements}
In the limit of infinitely many measurements, Eq. (\ref{eq:total_detection_prob}) yields $R_{\infty}(\eta,t)=1.$
Then, (\ref{eq:weak_first_return_time_N}) becomes
\begin{equation}\label{eq:weak_first_return_time_infty}
    \tau_{\infty}(\eta, t):= \sum_{n=1}^{\infty} n \, p_n(\eta,t) \overset{\text{def.}}{=} \sum_{n=1}^{\infty} n \, \norm{\Hat{a}_n^{\eta}(t)\ket{\psi}}^2.
\end{equation}
For this limit, the Laplace transform of the first return operator provides a useful tool \cite{grunbaum2013recurrence, bourgain2014quantum, grunbaum2018generalization, yin2019large}.
Due to the fact that $\norm{\hat{a}_n^{\eta}(t)} \leq 1 \,\forall \,\eta \in [0,1]$, one defines the (absolutely convergent) generating function
\begin{equation}\label{eq:tau_weak}
    \hat{a}(\eta, z,t):=\sum_{k=1}^{\infty} \hat{a}_k^{\eta}(t)\,z^k, \qquad z \in \mathbb{D}:=\{z\in\mathbb{C}: |z| < 1\},
\end{equation}
By using the Green's function of the survival operator,
\begin{equation}\label{eq:u_tilde_eta}
    \Tilde{U}_{\eta}(z,t):=\left(\mathds{1} - z \Tilde{U}_{\eta}(t)\right)^{-1},
\end{equation}
the generating function in (\ref{eq:tau_weak}) becomes
\begin{equation}\label{eq:generating_function}
    \hat{a}(\eta, z,t) = z P_{\eta} U(t) \Tilde{U}_{\eta}(z,t) P
\end{equation}
Thus, the complex analytic methods from \cite{grunbaum2013recurrence,bourgain2014quantum} are directly applicable. In analogy to \cite{bourgain2014quantum}, we define $z:=e^{i\theta}$ to rewrite (\ref{eq:weak_first_return_time_infty})
as the $\eta-$dependent integral, which is $\frac{-1}{2\pi}$ times a \textit{one-parameter family of Aharonov-Anandan phases} \cite{aharonov1987phase, wu1996berry, bourgain2014quantum, bengs2023aharonov}, i.e.
\begin{equation}\label{eq:aharonov_anandan_eta}
    \tau_{\infty}(\eta, t)\equiv \frac{1}{2\pi i} \int_{0}^{2\pi} d \theta \left\langle\psi(\eta, \theta, t),\frac{\partial}{\partial\theta}\psi(\eta, \theta, t)\right\rangle,
\end{equation}
with $\ket{\psi(\eta, \theta,t)}:= \hat{a}(\eta, e^{i \theta},t)\ket{\psi}$ and $\theta \in [0, 2\pi]$.
The detailed derivation of (\ref{eq:aharonov_anandan_eta}) is presented in Appendix \ref{appendix:details_integral}.
Note, that we only consider a finite-dimensional Hilbert space. The spectral measure is the point spectrum of the survival operator \cite{bourgain2014quantum, grunbaum2018generalization}. An example of (\ref{eq:aharonov_anandan_eta}) is presented in Fig. \ref{fig:2_vertex_graph} (d).

Eq. (\ref{eq:aharonov_anandan_eta}) is equivalent to computing \cite{grunbaum2020quantum} 
\begin{equation}\label{eq:vectorized_z_1_limit}
    \tau_{\infty}(\eta, t) = 1 + \lim_{x\to 1} \frac{\partial}{\partial x} \Tr\left( vec\,\Hat{a}(\eta, x, t)\,vec\, \rho \right)
\end{equation}
with $\rho:=\ketbra{\psi}{\psi}$ and $vec\,\Hat{a}(\eta, x, t)$ being the vectorized generating function of (\ref{eq:generating_function}), i.e.
\begin{equation}\label{eq:vectorized_geneating_function}
\begin{split}
    vec\,\Hat{a}(\eta, x, t) = &(P_{\eta}\otimes P_{\eta})(U_t\otimes \overline{U_t})\\
    &\left(\mathds{1}\otimes\mathds{1}-x(Q_{\eta}\otimes Q_{\eta})(U_t\otimes \overline{U_t})\right)(P\otimes P).
\end{split}
\end{equation}
The integral representation in (\ref{eq:aharonov_anandan_eta}) is formulated in the quantum system's Hilbert space, whereas (\ref{eq:vectorized_z_1_limit}) requires squaring that space. 

\section{\label{section:Indirect_First_Hitting_Time}Weakly Monitored First Hitting Time}
\subsection{Analytical Results}
Equations (\ref{eq:weak_first_return_time_N}) and (\ref{eq:aharonov_anandan_eta}) allow an analytical discussion for both the \textit{Zeno limit} and the minimal, two-level system. 
\subsubsection{Zeno Dynamics}
If the rate, at which a quantum walker is measured, is too high, there is no unitary evolution, but the walker is \textit{frozen} at its initial state. This freezing phenomenon is known as \textit{quantum Zeno effect} \cite{itano1990quantum}. Mathematically, this limit is obtained by sending $t$ to $0$, so the unitary evolution time becomes infinitely small. 
If $\ket{\psi}$ is in initial state, i.e. $P\ket{\psi} = \ket{\psi}$, then by Eq. (\ref{eq:shear_projectors}), $Q_{\eta}\ket{\psi} = \sqrt{1-\eta} \ket{\psi}$, so the first detection probability (\ref{eq:eta_return_probability}) reads 
\begin{equation}\label{eq:zeno_limit}
     p_n(\eta,0) = \eta (1-\eta)^{n-1}.  
\end{equation}
For $\eta=1$, $p_1(1,0)=1$ and  $p_n(1, 0) = 0$, for all $n> 1$.
Hence, the usual (strongly monitored) Zeno limit for the detection probability coincides with the weak protocol, i.e. when the dynamics is frozen, the detection probability in step one is 100 percent. 
For $\eta\in(0,1)$ and $n > 1$, $0\leq p_n(\eta,0) <1$, enabling a non-zero probability to detect the particle in the second, third, etc. measurement, \textit{even though} the system was frozen at its initial conditions. Since any $\eta<1$ causes a loss of information, one can interpret such a delocalization as a lack of accuracy during the measurement process. 

For the Zeno limit, we can compute the \textit{strong} first hitting time $\tau_N(1,0)$, as well as the \textit{weak} first hitting time $\tau_N(\eta,0)$ analytically. Naturally, since the quantum walker does not move at all, $\tau_N(1,0) = 1$, i.e. only one measurement is required to find the quantum walker. Due to loss of information at each measurement step in the weak protocol, this number decays reciprocal with $\eta$ as
\begin{equation}\label{eq:one_over_eta_law}
    \tau_N(\eta,0)=\frac{1}{\eta} - \frac{N (1-\eta)^N}{1-(1-\eta)^N}.
\end{equation}
Details are presented in Lemma \ref{lem:one_over_eta}. Coupling the quantum system to an environment, i.e. the ancilla qubit, corresponds to loosing information, or equivalently, missing some first return events on average. Hence, 
\begin{equation*}
\tau_{\infty}(\eta,0) = \frac{1}{\eta}>1, \qquad 0<\eta<1.
\end{equation*}
For infinitely many measurements, the projection-valued limit ($\eta=1$) is clearly recovered, i.e. $\tau_{\infty}(1,0)=1$.
\begin{figure}[ht]
    \centering
    \includegraphics[width=\linewidth]{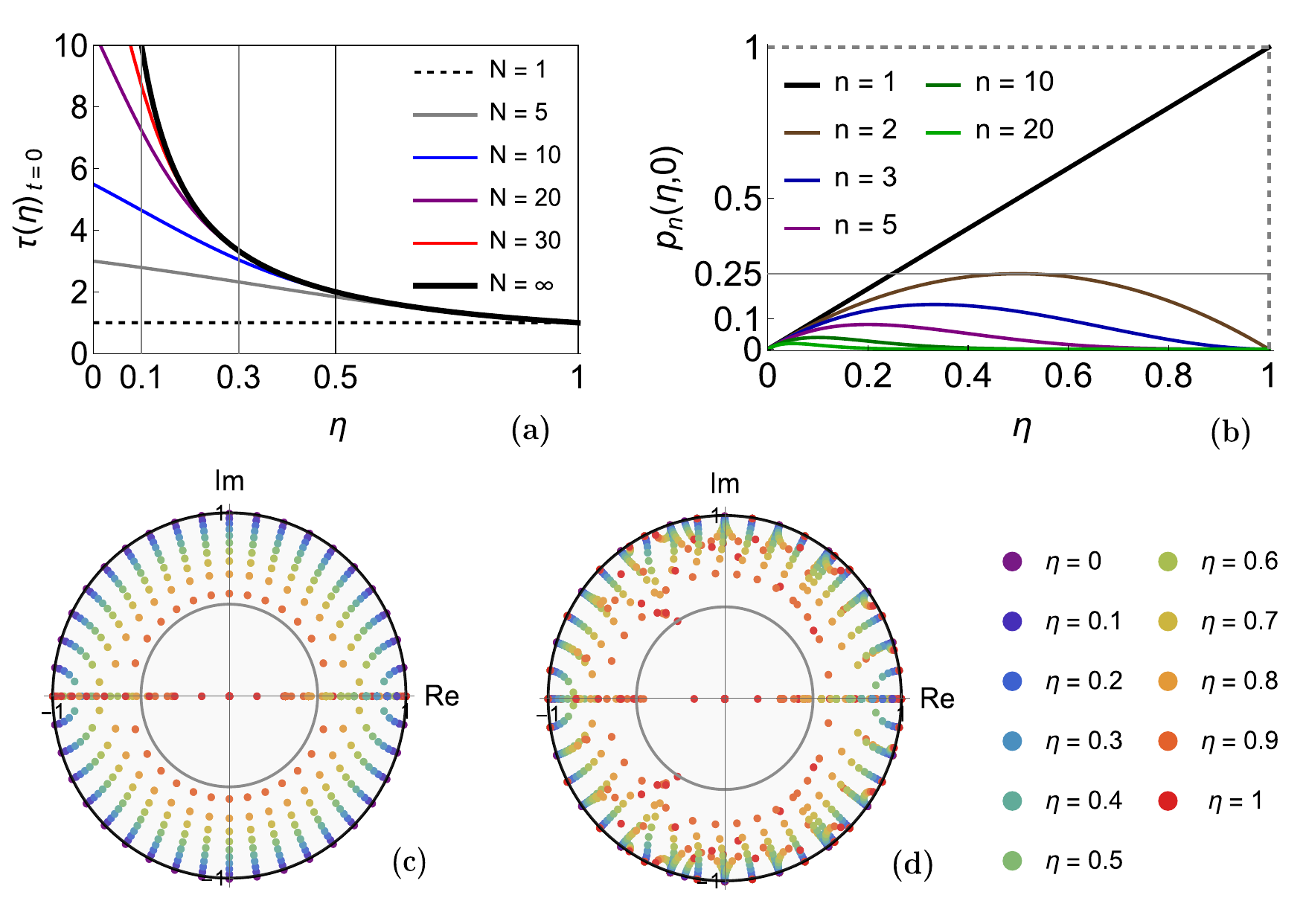}
    \caption{\justifying (a) $\frac{1}{\eta}$ decay of the mean recurrence time for $t=0$: The function $\tau_N(\eta, t=0)$ from (\ref{eq:one_over_eta_law}) is plotted against $\eta\in (0,1]$ for several values of $N$. One can see that the convergence towards infinitely many measurements happens very fast. Even for a small number of measurements, one gets the dominating $\frac{1}{\eta}$ behavior. For $N=\infty$, the $N-$ dependent corrections vanish, so that the black curve is exactly $\tau_{\infty}(\eta,t=0) = \frac{1}{\eta}$. (b) $p_n(\eta,0)$ from Eq. (\ref{eq:zeno_limit}) plotted against $\eta$ for several values of $n$. (c) $\eta$-dependent spectrum of the survival operator $\Tilde{U}_{\eta}(t)$ for the two-level system described in Sec. \ref{subseq:minimal_example}. The spectrum is shown for 40 equidistant time steps between $0$ and $2\pi$, i.e. $\Delta t = \frac{\pi}{20}.$ As $\eta$ goes to zero, the eigenvalues move towards the edge of the unit disk. At $\eta=0$, the values lie exactly on the edge, which means that no measurement intercepts at all (i.e. $\tau_N(\eta = 0,t) = \infty$ for any $N$). The inner circle has radius $\frac{1}{2}$, which corresponds to the dimensional separation of the subspace measurement. (d) $\eta$-dependent spectrum of the survival operator $\Tilde{U}_{\eta}(t)$ for the benzene ring. Due to larger system size and more support points, the spectral regularity is more disturbed compared to (c).}  \label{fig:one_over_eta_half_circ_survival_ops}
\end{figure}
\subsubsection{Two Level System}\label{subseq:minimal_example}
\begin{figure}[t]
    \centering
    \includegraphics[width=\linewidth]{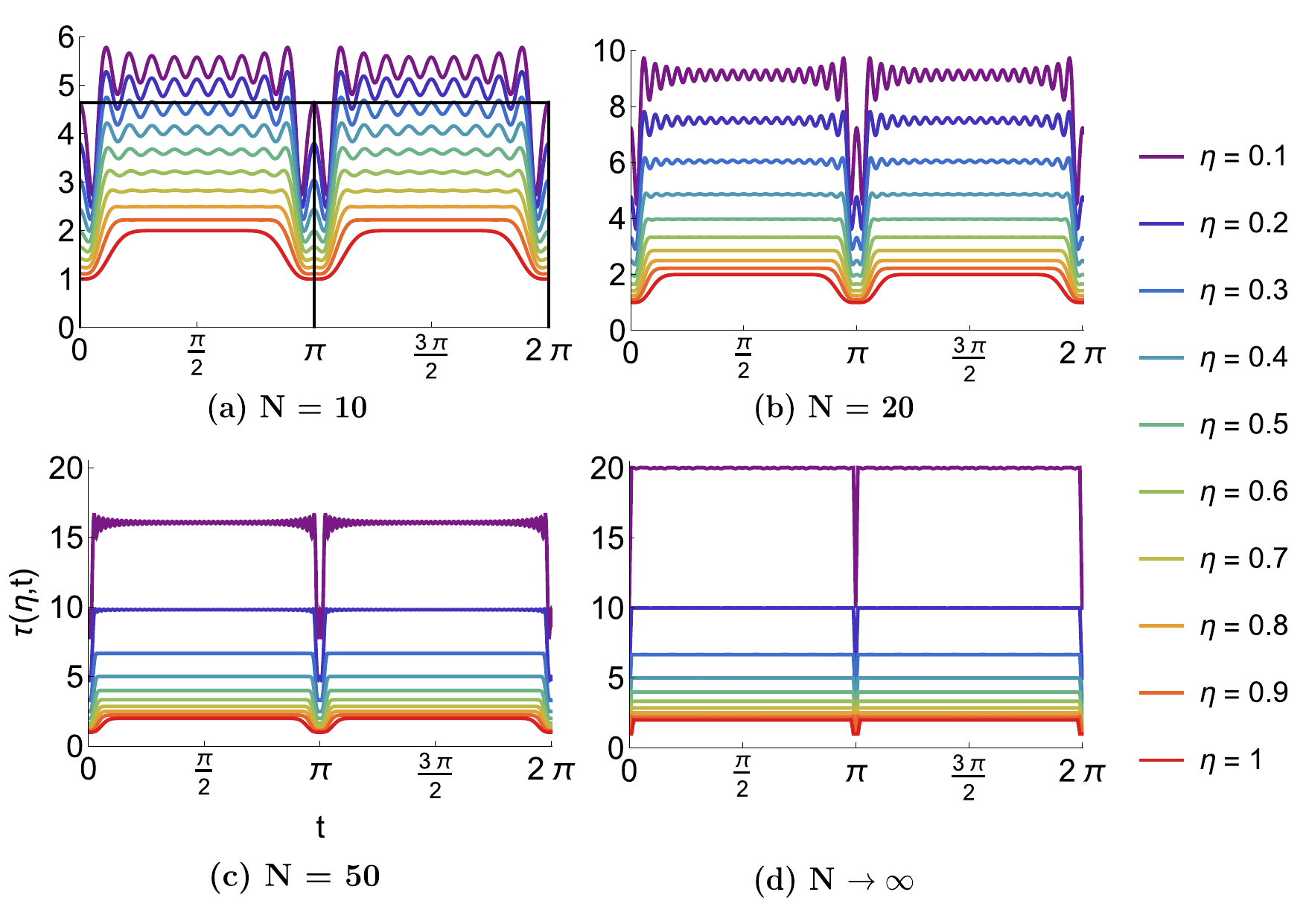}
    \caption{\justifying Two-level system: expected first return time $\tau_N(\eta,t)$ of a MCTQW on a 2-vertex graph with one edge with (a) $N=10$, (b) $N=20$, (c) $N=50$, and (d) $N=\infty$ measurements, plotted against $t\in[0,2\pi]$ for several values of $\eta$. This minimal graph is illustrated in Figure \ref{fig:ibm_q_plot}. 
    We start the protocol in vertex 0 and compute the expected number of measurements, which it takes for the particle to return for the first time. 
    $\eta=1$ corresponds to a projection-valued measurement with $\tau(1,t)=2$, except for $t=\pi$. Hence, apart from $t=\pi$, the average return time is expected to be between $2$ and $20$ (compare with (\ref{eq:two_level_sys_symbolic})). One obtains Gibbs oscillations of the return time, which increase in the amplitude as $\eta$ goes to $0$. In (a): the vertical black lines indicate the positions of the peaks that arise at multiple values of $\pi$. The horizontal black line indicates $\tau_{10}(\eta=0.1, t=0)$ in the Zeno limit ($U_0\equiv\mathds{1}$). For $\eta=1$, clearly $\tau_{10}(1,0) = 1.$ For other values of $\eta$, $\tau_N(\eta, 0)$ increases according to (\ref{eq:one_over_eta_law}).
    We see oscillations for a small number of interspersed measurements ($N\simeq 10^1$), which decay for larger $N$. This so-called Gibbs phenomenon originates from slow convergence of the respective Fourier coefficients.}
    \label{fig:2_vertex_graph}
\end{figure}
\begin{figure}[t]
    \centering
    \includegraphics[width=\linewidth]{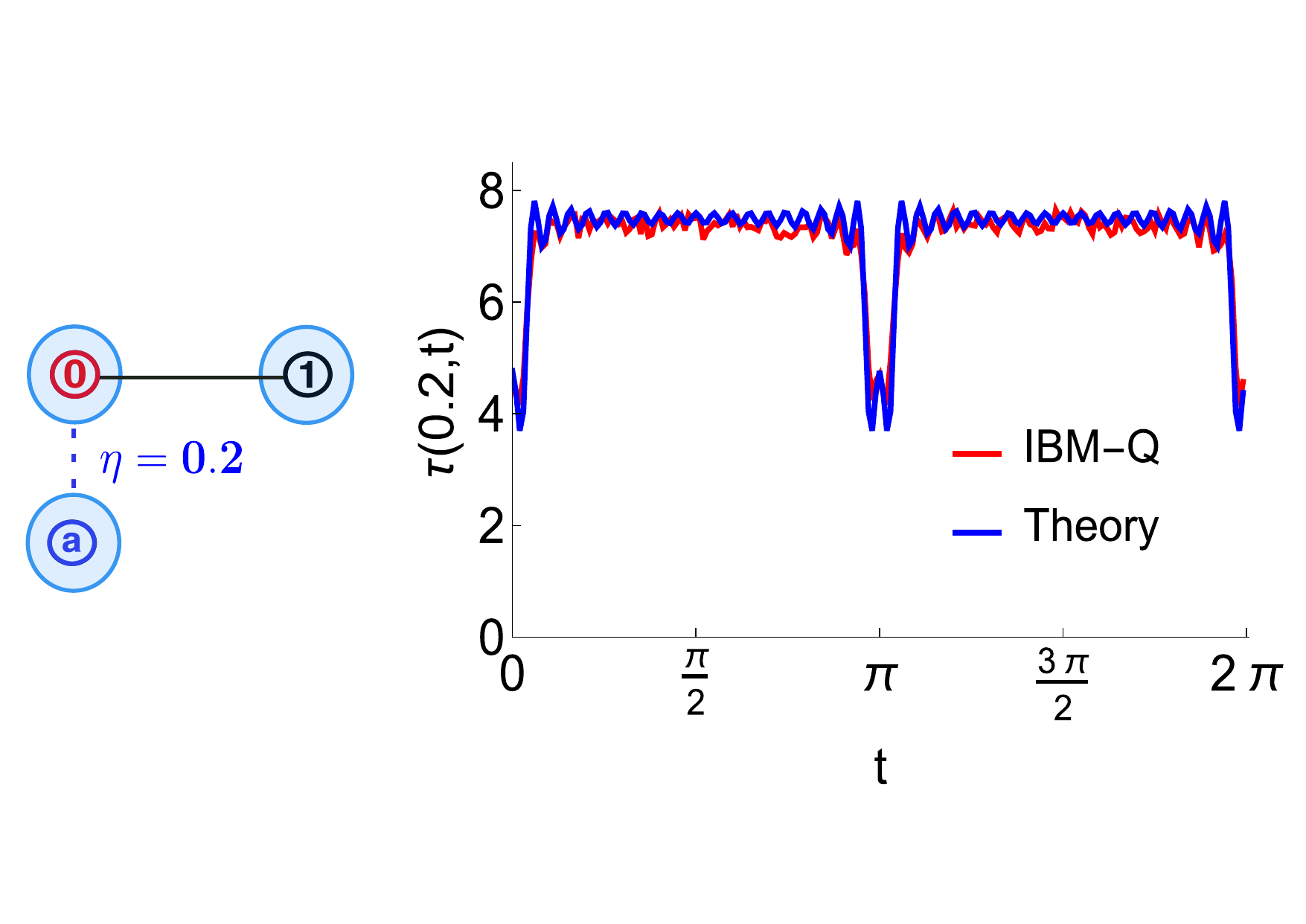}
    \caption{\justifying The expected first return time $\tau_{20}(\eta=0.2, t)$ through $N=20$ weak measurements with coupling strength $\eta=0.2$  for the minimal graph example, computed on IBM-Q Sherbrooke. Blue curve: theoretical prediction for $\tau_{20}\left(\eta=0.2, t\right)$, with $\ket{\psi}=\ket{10}$ via Eq. (\ref{eq:weak_first_return_time_N_telescoping}). Red curve: results of IBM-Q. The quantum circuit is displayed in Figure \ref{fig:quantum_circuit_first_return}. The deviation between theory and experiment is in agreement with the error rates provided by IBM. 
    }
    \label{fig:ibm_q_plot}
\end{figure}
The minimal example consists of a graph with two vertices and one edge connecting them. Without loss of generality, we measure the vertex labeled with 0, as displayed in Fig. \ref{fig:ibm_q_plot}. The Hamilton operator is given by $H = -\sigma_x \equiv - \begin{pmatrix}
    0 & 1 \\
    1 & 0 \\
\end{pmatrix}$.
The two non-degenerate eigenvalues are $E_{\pm} = \pm 1$, which determine the unitary evolution to be $U(t) = e^{-itH} = e^{-it}P_+ + e^{it} P_-$, where $P_{\pm}$ denote the projectors onto the respective eigenspaces. 
%
%
%
For a two-level system, we compute $\tau_{\infty}(t, \eta)$ by Eqs. (\ref{eq:u_tilde_eta}, \ref{eq:generating_function}, \ref{eq:vectorized_z_1_limit}). We use the vectorization approach of \cite{grunbaum2018generalization} and use $\rho:=\ketbra{\psi}{\psi},$ 
\begin{align*}
    U_t &= \begin{pmatrix}
        \cos{t} & i\sin{t} \\
        i\sin{t} &\cos{t}
    \end{pmatrix},
    \qquad
    \ket{\psi} = \begin{pmatrix}
        1 \\
        0
    \end{pmatrix},
    \\
    P_{\eta} &= \begin{pmatrix}
        \sqrt{\eta} & 0 \\
        0 & 0
    \end{pmatrix}, 
    \qquad
    Q_{\eta} = \begin{pmatrix}
        \sqrt{1-\eta} & 0 \\
        0 & 1
    \end{pmatrix}, 
\end{align*}
and
\begin{align*}
    \Tilde{U}_{\eta}(t) &= \begin{pmatrix}
        \sqrt{1-\eta}\cos t & i\sqrt{1-\eta}\sin t \\
        i \sin t &  \cos t 
    \end{pmatrix}
\end{align*}
to evaluate $g(t,x,\eta):=  \Tr\left( vec\,\Hat{a}(t,x,\eta)\,vec\, \rho \right).$ Then, 
\begin{equation}\label{eq:two_level_sys_symbolic}
\begin{split}
    \tau_{\infty}(t, \eta) &= 1 + \lim_{x\to1} \frac{\partial}{\partial x} g(t,x,\eta) \\
    &= \frac{\tau_{\infty}(t)}{\eta} = \begin{cases}
        \frac{1}{\eta}, \quad \text{if\,} t = k\pi, \,k\in\mathbb{Z}, \\
        \frac{2}{\eta}, \quad \text{else.}
    \end{cases}
\end{split}
\end{equation}
$g(t,x,\eta)$ is written out in Appendix \ref{appendix:symbolic_expression}.

%
%
%
%
Our findings coincide with those first hitting times, that have been calculated already in \cite{friedman2017quantum}. There, it was shown that
\begin{equation*}
     \tau_{\infty}(t) = \begin{cases}
        1, \quad \text{if\,} t = k\pi, \,k\in\mathbb{Z}, \\
        2, \quad \text{else.}
    \end{cases}
\end{equation*}
Their result corresponds exactly to the projection-valued limit $\eta=1$.
Consequently, our weak protocol yields for a two level system:
\begin{equation}\label{eq:1_over_eta_projection_suggestion}
    \tau_{\infty}(t, \eta) =  \frac{\tau_{\infty}(t)}{\eta}. 
\end{equation}

More generally, consider a tight-binding Hamilton operator for rings of length $L$:
\begin{equation}\label{eq:Hamiltonian_ring_L}
        H = - \sum_{i=0}^{L-1} \left(\ketbra{i}{i+1} + \ketbra{i+1}{i} \right)
\end{equation}
For this system, the mean first detection time $\tau_{\infty}$ reads (see Eq. (57) in \cite{friedman2017quantum})
\begin{equation}\label{eq:first_return_ring_L}
    \tau_{\infty}(\eta=1) = \begin{cases}
        \frac{L+2}{2}, \quad \text{if\,} L \text{\,is even,} \\
         \frac{L+1}{2}, \quad \text{if\,} L \text{\,is odd.} 
    \end{cases}
\end{equation}
Hence, for all non-characteristic times $t\in[0,2\pi),$ we expected a mean value of two, which is confirmed by the red curves ($\eta=1$) in Fig. \ref{fig:2_vertex_graph}. At those times $t\in[0,2\pi)$, that belong to the spectral support of the system, one observes discontinuous jumps in the return time. The height of the plateaus contains both information about the size of the system and also the spectral degeneracy of the underlying graph.  
Rings of size $L$ are studied in great generality in \cite{friedman2017quantum}. The \textit{exceptional sampling times} $t_e$ are given by \cite{friedman2017quantum},
\begin{equation}\label{eq:exceptional_sampling_times_ring_length_L}
t_e = \frac{2\pi k}{\Delta E} \mod 2\pi
\end{equation}
for $k\in\mathbb{Z}$ and energy difference $\Delta E \equiv E_i-E_{j}>0.$ 
In our minimal example, $L=2$ so there is only one energy gap $\Delta E = +1- (-1) = 2$. Thus, the exceptional sampling times occur at integer multiples of $\pi$, which is seen in Fig. \ref{fig:2_vertex_graph}.

We emphasize that the weak measurement procedure does not have any influence on the exceptional sampling times or the relative shape of the peaks. It is merely an offset in the first detection time $\tau$, due to a decreasing coupling strength between the ancilla qubit and the quantum system. The special sampling times are degeneracies in the spectrum of the unitary operator \cite{liu2020quantum}.
\begin{figure}[t]
    \centering
    \includegraphics[width=\linewidth]{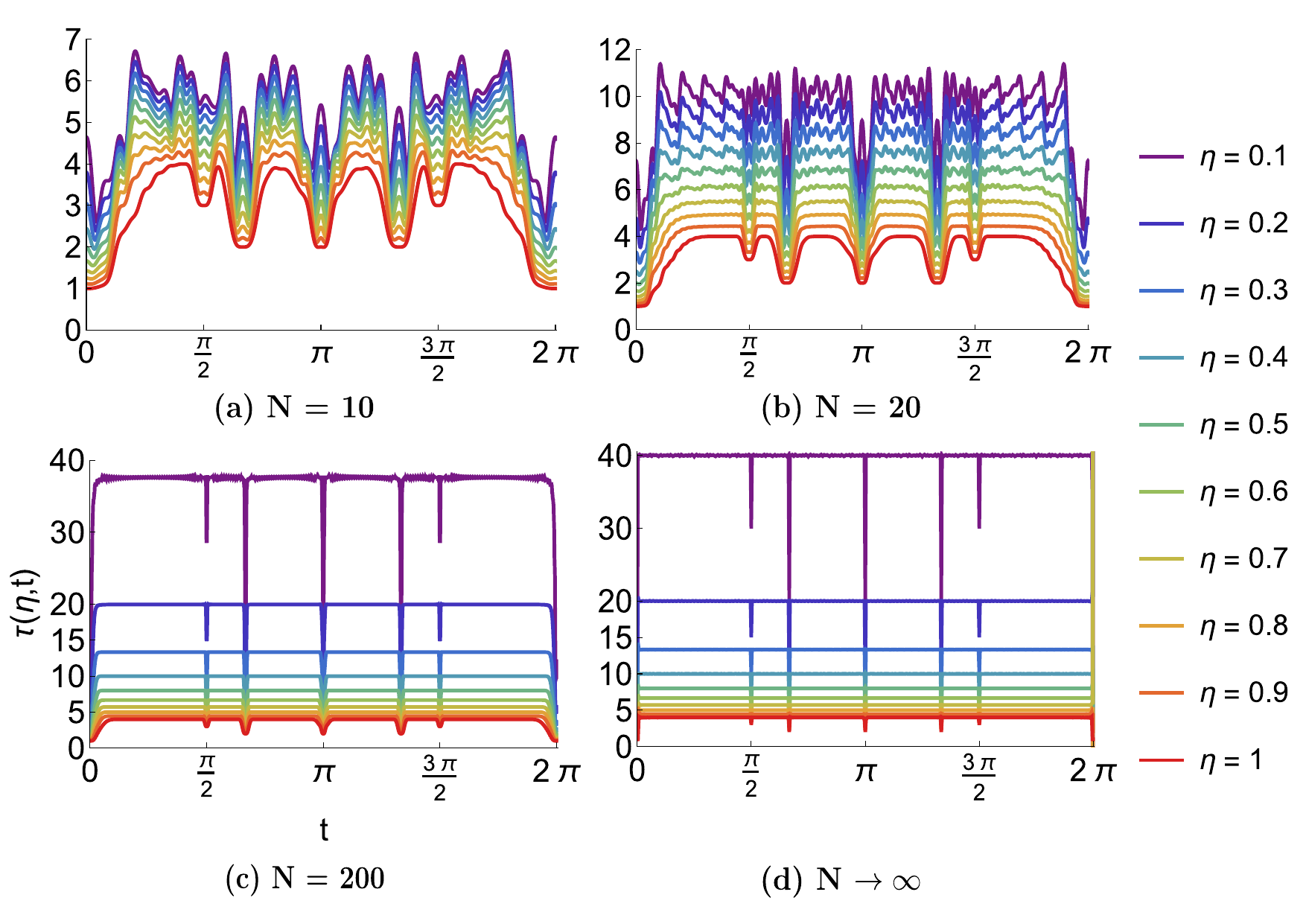}
    \caption{\justifying Benzene ring: expected first return time of a MCTQW on Benzene with (a) $N=10$, (b) $N=20$, (c) $N=200$, and (d) $N=\infty$ measurements for several values of $\eta$. 
    We start the protocol in vertex 5 and compute the expected first return time
    $\tau(\eta, t)$ against $t$ for several $\eta$. $\eta=1$ corresponds to a projection-valued measurement with $\tau(1,t)=4$, except for the exceptional points $\mathcal{T}_{\text{Benzene}}$ (as discussed in the main text). Hence, the average return time is expected to be between $4$ and $40$ (compare with (\ref{eq:one_over_eta_law})). Again, there are Gibbs oscillations of the return time, which increase in the amplitude as $\eta$ goes to $0$. For $\eta=1$, this is clearly at $\tau(1,0) = 1.$ For other values of $\eta$, that goes up according to (\ref{eq:one_over_eta_law}).
    We see oscillations for a small number of interspersed measurements ($N\simeq 10^1$), which decay for larger $N$. This (Gibbs) phenomenon is caused by a slow convergence of the underlying Fourier coefficients.}
    \label{fig:benzene_ring_graph}
\end{figure}
\subsection{Numerical and Experimental Results}
\subsubsection{Two Level System on an IBM Quantum Computer}
%
%
%
We implement the quantum circuits shown in Fig.~\ref{fig:quantum_circuit_first_return} on a superconducting quantum processor for the two-level system. The readout assignment error---which quantifies the inaccuracy in qubit state discrimination---is approximately \(5 \times 10^{-3}\). The error rates for two-qubit cross-resonance gates range between \(5 \times 10^{-3}\) and \(1 \times 10^{-2}\). To mitigate decoherence effects, we employ dynamical decoupling \cite{ezzell2023dynamical}, a technique that preserves quantum coherence by applying a sequence of control pulses to suppress environmental noise and system-bath interactions during measurement.
The experimental results obtained from IBM-Q Sherbrooke (red curve) are in excellent agreement with the theoretical predictions (blue curve) in Fig.~\ref{fig:ibm_q_plot}, showing only minor deviations and confirming the accuracy of the model. Moreover, the magnitude of these deviations is predictable by the error rates of the cross-resonance gates: the average deviation between theoretical and experimental data fluctuates around 2\%, which corresponds roughly to \(N \times10^{-3}\), where $N=20$ is the circuit depth. 
\subsubsection{Benzene Ring}\label{subseq:benzene_ring}
The first return time problem for strongly monitored quantum walks on Benzene-type rings was extensively studied in \cite{friedman2017quantum}. The complete graph $K_6$ subjected to an open quantum dynamics was studied in \cite{sinkovicz2015quantized}.
In order to make an adequate comparison to our weak measurement protocol, we choose the Benzene ring of length $L=6$ as the second example. The Hamiltonian is given in (\ref{eq:Hamiltonian_ring_L}).
The energy eigenvalues of the Hamiltonian are $\sigma(H)=\{-2,-1_2, 1_2, 2\}$, where the subscript $2$ indicates double degeneracy. 
For $L=6$, Eq. (\ref{eq:first_return_ring_L}) predicts on average $\tau_{\infty} = 4$ for nearly all values of $t$, which corresponds to the expected first return time through \textbf{strong} measurements ($\eta=1$, compare with red curves in Fig. \ref{fig:benzene_ring_graph}).
According to (\ref{eq:exceptional_sampling_times_ring_length_L}), we expect exceptional sampling times at (see Figure \ref{fig:benzene_ring_graph}) $t\in \mathcal{T}_{\text{Benzene}}:= \left\{ \frac{\pi}{2}, \frac{2\pi}{3},\pi,\frac{4\pi}{3},\frac{3\pi}{2}  \right\}$ with discontinuous return time due to fractional revival of the wave packet \cite{friedman2017quantum}.
\begin{figure}[t]
    \centering
    \includegraphics[width=\linewidth]{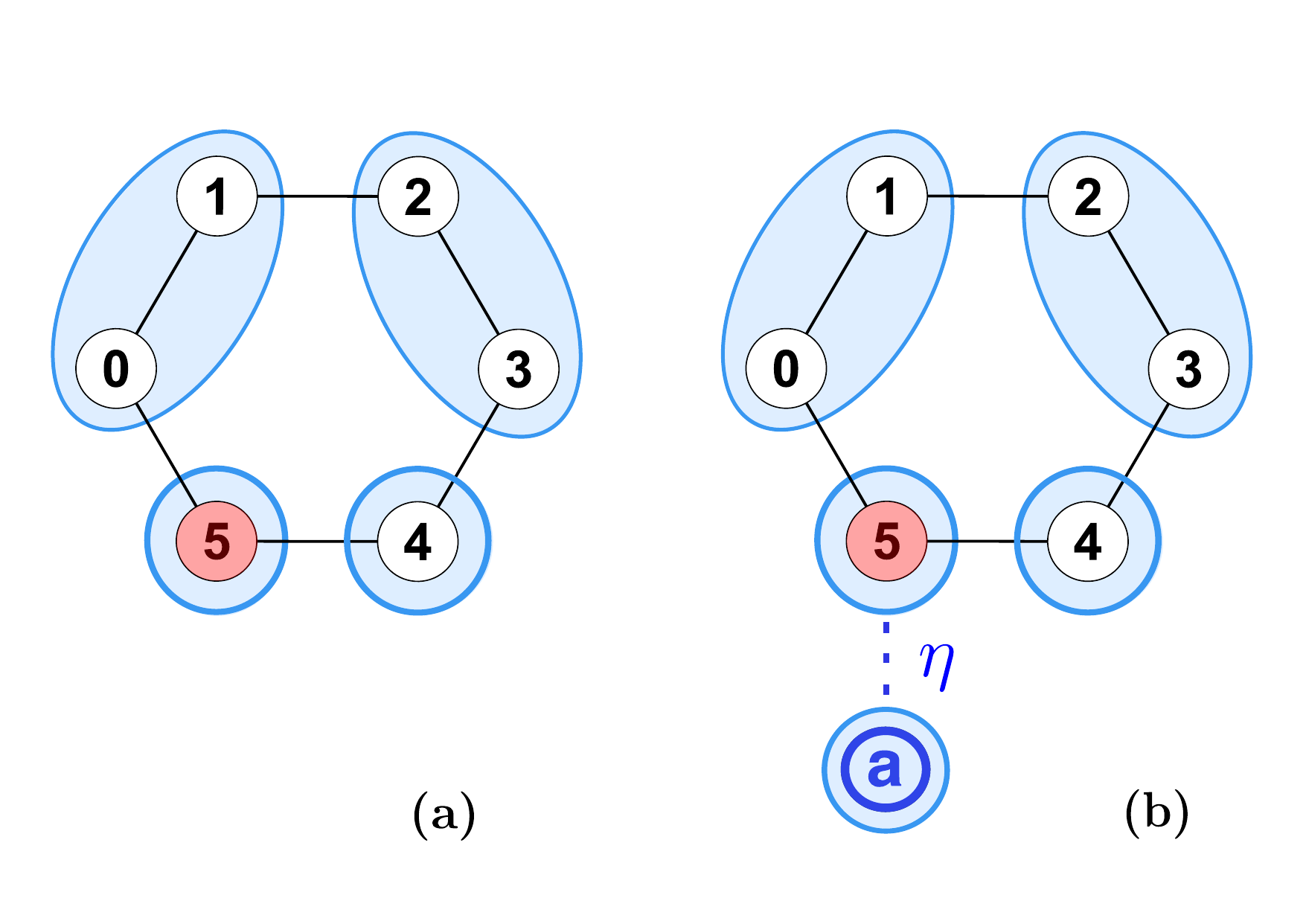}
    \caption{\justifying (a) Benzene ring encoded into a 4-qubit basis. This is the (quantum) system on which a MCTQW is performed. The blue areas indicate one qubit each. The initial vertex, which is being measured is marked in red. Since benzene is symmetric, it does not play any role, which vertex we measure. (b) The benzene ring from (a) with an additional ancilla qubit "a". The weak measurement procedure on the quantum system is implemented through the ancilla qubit with the coupling strength $\eta$. The encoding follows Proposition \ref{prop:canonical_encoding}.}
    \label{fig:benzene_ring_encoding}
\end{figure}
The resulting curves in Fig. \ref{fig:benzene_ring_graph} show qualitatively the same offset which is proporitional to $\frac{1}{\eta}$, as those in the two level system (Figure \ref{fig:2_vertex_graph}). Again, the exceptional sampling times $\mathcal{T}_{\text{Benzene}}$ are not affected by the \textit{type} of measurement protocol: whether we measure \textit{strongly} ($\eta=1$) or \textit{weakly} ($0<\eta<1$), makes no difference for the location of exceptional sampling times. This was to be expected, since the weakening of the measurement strength should not affect the spectral characteristics of the unitary operator \cite{liu2020quantum}. As $\eta$ goes to zero, the according loss of information causes an increase of the expected first return time.
Figure \ref{fig:benzene_ring_graph} (d) suggests that the first hitting time after $N\to\infty$ measurements for Benzene, $\tau^{\text{B}}_{\infty}(t, \eta)$, holds
\begin{equation*}
    \tau^{\text{B}}_{\infty}(t, \eta) =  \frac{\tau^{\text{B}}_{\infty}(t)}{\eta} = \frac{1}{\eta} \begin{cases}
        1, \quad t = 0, \\
        2, \quad t\in \left\{\frac{2\pi}{3},\pi,\frac{4\pi}{3}\right\}, \\
        3, \quad t\in \left\{\frac{\pi}{2}, \frac{3\pi}{2}\right\}, \\
        4, \quad \text{else,} \\
    \end{cases}
\end{equation*}
where every $t$ is modulo $2\pi$. 
\subsection{The $\frac{1}{\eta}$-Decay of the First Hitting Time}\label{subsec:one_over_eta_deviation}
\begin{figure}[t]
    \centering
    \includegraphics[width=\linewidth]{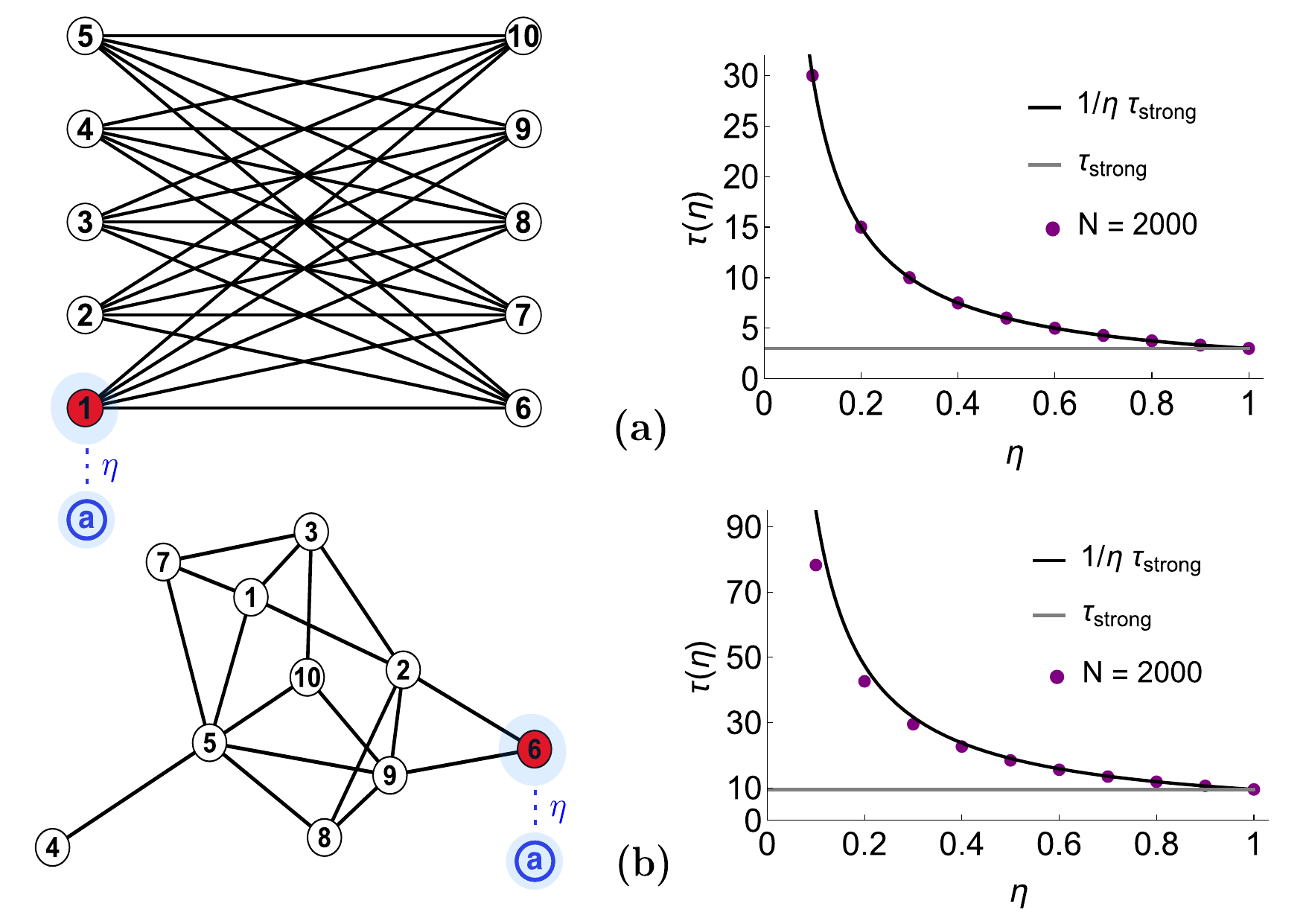}
    \caption{\justifying More examples for the $\eta^{-1}$ decay of the first hitting time with weak measurements: (a) complete bipartite graph with 10 vertices and (b) random graph with 10 vertices. The initial sites are marked in red. The blue surface indicates one qubit which is coupled to one ancilla qubit. 
    Again, the evolution time of the unitary is chosen to be fixed at $t=0.9$. We numerically compute the projection-valued first hitting time $\tau_N^{\text{strong}}$ (horizontal gray line) and then compare it with the weak first hitting time  $\tau_N$ (\ref{eq:weak_first_return_time_N_telescoping}), following the circuit in Fig. \ref{fig:quantum_circuit_first_return}, with $N=2.000$ for 10 values of $\eta$ between $\eta=0.1$ and $\eta=1.0$ (purple dots). For these systems with the respective initial conditions, the strong hitting times are: (a) $\tau_N^{\text{strong}} = 3$ and (b) $\tau_N^{\text{strong}} = 9.5$ (which goes to $\tau_{\infty}^{\text{strong}}=10$).
    The correlation in Eq. (\ref{eq:conjecture_one_over_eta}) is visible, although $\tau_N^{\text{weak}}(\eta)$ converges slower to $\frac{1}{\eta}\tau_N^{\text{strong}}$ for the random graph (b) compared to the complete bipartite graph (a).}
    \label{fig:random_graph_complete_graph}
\end{figure}
Numerical computation showed that the leading order in which the expected return time decays as $\eta$ goes from $1$ to $0$, behaves like $\frac{1}{\eta}.$ In the limit of $N\to\infty$, the $\frac{1}{\eta}$ behavior is very sharp, which indicates
$\tau_{\infty}(\eta, t) =\frac{1}{\eta} \,\tau^{}_\infty(t,1),$
where $\tau^{}_\infty(t,1)$ denotes the expected return time through strong, projection-valued measurements, i.e. $\eta=1$.  
Recall that the mean recurrence time for rank one projections on localized initial states - as it was shown by Grünbaum et al. \cite{grunbaum2013recurrence} - is always an integer or infinite. If the subspace of the measurement is higher dimensional, then the mean recurrence time is a fraction, as it was shown by Bourgain et al. \cite{bourgain2014quantum}. 
We have strong analytical, numerical and quantum-compu\-tational evidence for
\begin{equation*}\label{eq:conjecture_one_over_eta}
    \tau^{\text{weak}}_{\infty} = \frac{1}{\eta} \tau^{\text{strong}}_{\infty} \tag{$\star$}
\end{equation*}
but a rigorous proof for all unitary operators would be desirable. 
We are able to show (\ref{eq:conjecture_one_over_eta}) in the Zeno limit (Eq. (\ref{eq:one_over_eta_law}), Lemma \ref{lem:one_over_eta}), and symbolically for the two-level system (\ref{eq:two_level_sys_symbolic}). 
Several examples, which underscore Eq. (\ref{eq:conjecture_one_over_eta}) are shown in Fig. \ref{fig:random_graph_complete_graph} and Fig. \ref{fig:grid_graph_and_petersen_graph_one_over_eta}. It is mentioned that Eq. (\ref{eq:conjecture_one_over_eta}) is not restricted to fully localized initial states as it is seen in Fig. \ref{fig:grid_graph_and_petersen_graph_one_over_eta}.
\begin{figure}[t]
    \centering
    \includegraphics[width=\linewidth]{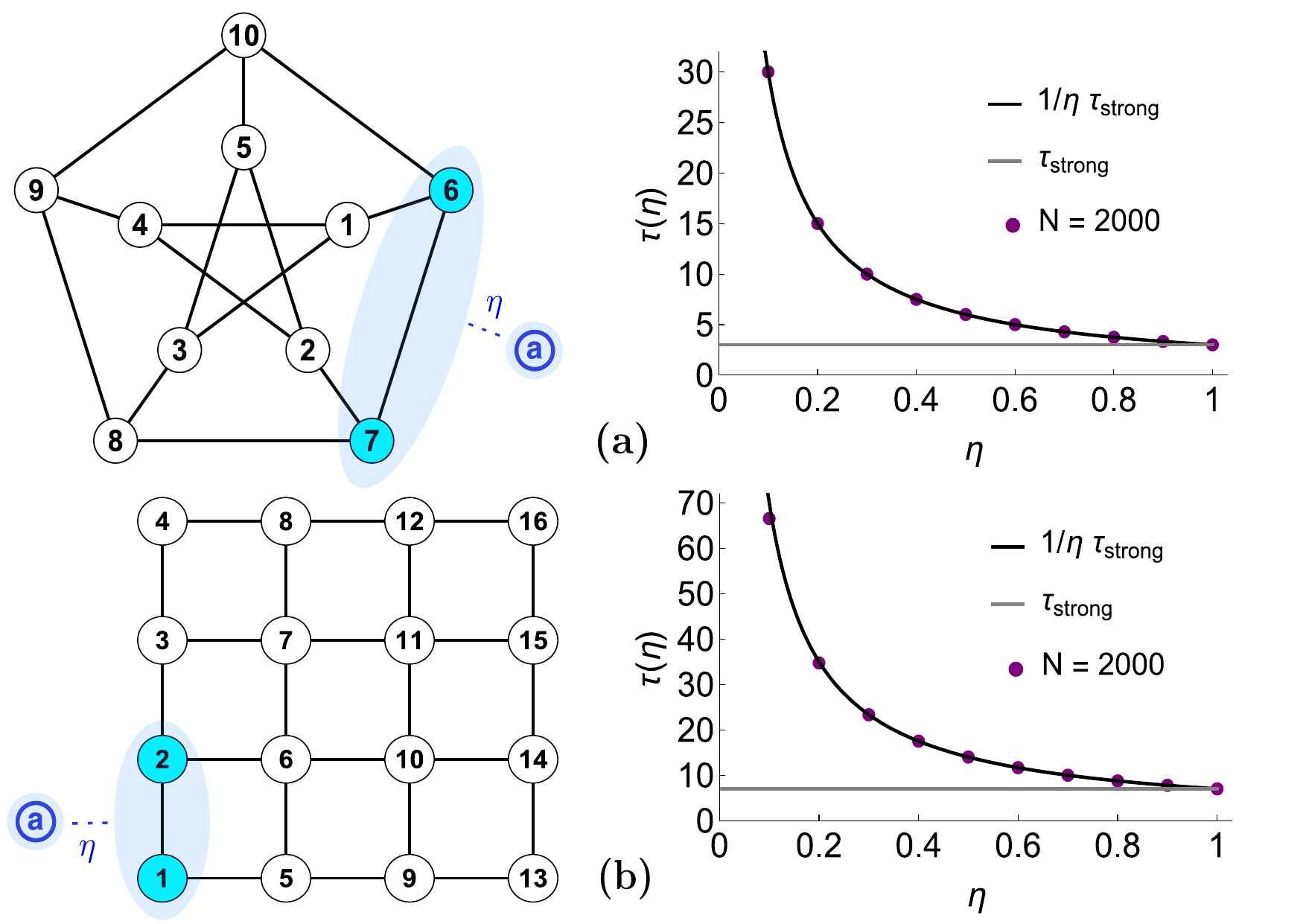}
    \caption{\justifying Examples for the $\eta^{-1}$ decay of the first hitting time with weak measurements for initial states that are in superposition of two sites: (a) Petersen graph with 10 vertices and (b) graph of a $4\times 4$ grid. The sites that span the initial state are marked in cyan. The blue surface indicates one qubit which is coupled to one ancilla qubit.  We choose a uniform superposition to be the initial state, i.e. for (a), $\ket{\psi} = \frac{1}{\sqrt{2}} \left(\ket{6} + \ket{7}\right)$ and for (b), $\ket{\psi} = \frac{1}{\sqrt{2}} \left(\ket{1} + \ket{2}\right)$.
   Again, the evolution time of the unitary is chosen to be fixed at $t=0.9$. Similar to the above examples, we numerically compute the projection-valued first hitting time $\tau_N^{\text{strong}}$ (horizontal gray line) and then compare it with the weak first hitting time $\tau_N$ (\ref{eq:weak_first_return_time_N_telescoping}), following the circuit in Fig. \ref{fig:quantum_circuit_first_return}, with $N=2.000$ for 10 values of $\eta$ between $\eta=0.1$ and $\eta=1.0$ (purple dots). For these systems with the respective initial conditions, the strong hitting times are: (a) $\tau_N^{\text{strong}} = 3$ and (b) $\tau_N^{\text{strong}} = 7.$  The expected correlation from Eq. (\ref{eq:conjecture_one_over_eta}) is visible.}
    \label{fig:grid_graph_and_petersen_graph_one_over_eta}
\end{figure}

The increase in the expected-first return time, which is proportional to  $\frac{1}{\eta}$ gives rise to the question: which $\eta'\in (0,1]$ characterizes the threshold between the quantum and the classical expected first return time? For the generic case of rings of size $L$, and for $N\to\infty$ measurements, we can give a simple answer: 
\begin{equation}\label{eq:quantum_classical_transition_rings_L}
    \eta'(L) = \frac{1}{2} + \begin{cases}
        \frac{1}{L}, \quad \text{if\,} L \text{\, is even,} \\
        \frac{1}{2L}, \quad \text{if\,} L \text{\, is odd,} \\
    \end{cases}
\end{equation}
which goes to $\frac{1}{2}$ as $L\to\infty.$
This can be seen as follows: the classical return time for those rings was shown to be $\tau^{\text{cl.}}_{\infty} = L$ \cite{friedman2017quantum}. The expected first return time through projection-valued measurements was derived in \cite{friedman2017quantum} and is given in (\ref{eq:first_return_ring_L}). Based on (\ref{eq:conjecture_one_over_eta}), we can solve $\frac{1}{\eta'} \tau^{\text{strong}}_{\infty}\overset{!}{=} \tau^{\text{cl.}}_{\infty}$ for $\eta'\in (0,1]$ and conclude (\ref{eq:quantum_classical_transition_rings_L}).
We emphasize that Eq. (\ref{eq:conjecture_one_over_eta}) remains valid, even if the time-reversal symmetry is broken through the presence of a weak, homogeneous magnetic field \cite{cresti2021convenient}. For rings, these systems are modeled by a tight-binding Hamiltonian of the form \cite{wang2024first} $H = - \sum_{j=0}^{L-1} \left( e^{i\alpha} \ketbra{j}{j+1} + e^{-i\alpha} \ketbra{j+1}{j}   \right),$ where $e^{i\alpha}$, for $\alpha\neq 1$, is a Peierls phase \cite{cresti2021convenient} which represents the magnetic flux, that causes different degeneracies in the energy levels and hence affects the mean recurrence time \cite{wang2024first}.

\section{\label{section:Conclusions}Conclusions}
In this work, we analyzed the first hitting time problem for continuous-time quantum walks on graphs through weak measurements. 
The composite quantum system under investigation was chosen to be suitable for implementation on quantum computers. The graph is encoded into a finite qubit register and an ancilla qubit is connected to \textit{one} of the system qubits. The weak measurement is implemented by a controlled rotation gate, whose reparameterized rotation angle indicates the \textit{measurement strength.} This specific operator-valued measurement corresponds to quantum evolution through a phase damping channel \cite{nielsen2010quantum}. 
The monitoring process of the quantum walk is realized by an on-site measurement protocol at a constant rate. After an initially fixed time period, the quantum system is intercepted and weakly measured. Depending on the outcome of this measurement, the protocol either \textit{continues} for no-click measurement, or is \textit{terminated}  for click measurement.
By applying the theory of quantum state recurrence \cite{grunbaum2013recurrence} and quantum subspace recurrence \cite{bourgain2014quantum}, we can compute the first hitting time depending on the rotation angle of the measurement.
Both for finitely and infinitely many measurements, we can evaluate the weakly monitored first return time by utilizing the Hilbert space formalism without dilation: a special case, which fails in general for operator-valued measurements.

Unlike in the seminal work of Grünbaum et al. \cite{grunbaum2013recurrence}, where the mean return time is strictly integer quantized, or infinite, in our setting the mean return time is not quantized. Nevertheless, a quantized structure emerges through a simple relation:
$\tau_{\text{weak}} = \frac{1}{\eta}\tau_{\text{strong}}$ 
where $\eta = \sin^2(g)$ and $g$ being the rotation angle inducing the strength of the measurement. This relation reveals a direct connection between recurrence properties under strong projective measurements and those under weak measurements through phase damping channels. While we expect such a correspondence to generally hold for recurrence observables, its validity for other classes of observables remains an open question. The relation shows, as expected, that the weaker we measure, the longer it takes to detect the recurrence. Thus, on the one hand, weak measurements interfere less with the system. On the other hand, we need more measurements to detect the system in the target state.

Consequently, several natural research questions arise. At first, can we find a similar dependence on the first hitting time by the measurement parameter if we use a different coupling, e.g. an $XX$ coupling? If yes, is it also proportional to the inverse of the parameter?
Secondly, a generalization to multi-qubit weak measurements with various rotation angles provides a promising area of research with applications in designing measurement-induced quantum algorithms. 
Thirdly, our implementation draws a bridge between recurrence and the rich literature on entanglement measures \cite{vidal2002computable, vedral1997quantifying, eisert1999comparison} and information theory of networks \cite{radicchi2020classical}. Studying the connection between the first hitting time and quantum non-locality may provide further insights into the interplay between stochastic processes, entanglement theory, information theory and foundational quantum mechanics. 
Finally, the analogous \textit{first transfer time problem} may be studied within the same scope, both for phase damping channels and more general quantum measurements.

\subsection*{\label{section:Acknowledgements}Acknowledgements}
%
TH gratefully acknowledges Alessio Belenchia, Benjamin Desef, Maxim Efremov, Jakob Murauer, Felix Rupprecht, and Sabine Wölk for valuable discussions.  
This project was made possible by the DLR Quantum Computing Initiative and the Federal Ministry for Economic Affairs and Climate Action.
%
EB acknowledges the support of Israel Science Foundation's grant 1614/21.
%
%
We acknowledge the use of IBM Quantum services for this work. The views
expressed are those of the authors and do not reflect the official policy or position of IBM or the IBM Quantum team. In this paper, we used the IBM Sherbrooke Processor, which is an IBM Quantum
Eagle Processor. 

\bibliographystyle{ieeetr}
\bibliography{main}

\onecolumngrid
\appendix
\section{Recurrence by Weak Measurements}
\subsection{Encoding of the First Return Event}\label{subseq:appendix_practical_encoding}
In order to implement monitored quantum walks on real quantum hardware, one must give a precise encoding of states into qubit-Hilbert spaces of dimension $d = 2^N$. The goal of such an embedding is to keep $N\in\mathbb{N}$ as small as possible in order to minimize computational resources. The following well-known block encoding ensures the usage of $\log(d)$ many qubits for computing first hitting times on rings. 
\begin{proposition}\label{prop:canonical_encoding}
    Consider the tight-binding Hamiltonian of the form (\ref{eq:Hamiltonian_ring_L}) for rings of length $L\in\mathbb{N}$. If $k<L$ sites are to be weakly measured, then there exists an encoding which requires $ N(L) = \lceil \log_2(L-k) \rceil + 2k $ many qubits. 
\end{proposition}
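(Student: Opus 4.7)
My plan is to give an explicit injective embedding of the $L$ ring sites into computational basis states of a combined qubit register, and then count the qubits it uses. I would partition the vertex set $\{0, 1, \dots, L-1\}$ into the $k$ measured sites $\mathcal{M}$ and the $L-k$ unmeasured sites $\mathcal{U}$, and allocate three kinds of qubits: $k$ dedicated \emph{marker} qubits $q_1, \dots, q_k$ (one per measured site); $m := \lceil \log_2(L-k)\rceil$ \emph{binary} qubits for a bit-string representation of the elements of $\mathcal{U}$; and $k$ \emph{ancilla} qubits $a_1, \dots, a_k$ that implement the controlled $R_Y(\eta)$ gate of Definition~1 independently at each measured site. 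The embedding sends a measured site $i \in \mathcal{M}$ to the basis state carrying $\ket{1}$ on $q_i$ with $\ket{0}$ on every other system qubit, and an unmeasured site $j \in \mathcal{U}$ to the basis state carrying $\ket{\mathrm{bin}(j)}$ on the binary register with $\ket{0}^{\otimes k}$ on the markers.

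Next I would verify that this is a well-defined embedding: the marker register of any measured-site image contains exactly one $\ket{1}$, while the marker register of any unmeasured-site image is $\ket{0}^{\otimes k}$, so the two image classes are disjoint, and distinctness within each class is immediate from the single-$\ket{1}$ position, respectively the binary encoding. Because the event ``walker at site $i \in \mathcal{M}$'' now coincides with $q_i = \ket{1}$, the weak monitoring protocol of Section~\ref{section:Indirect_Measurements} applies locally: for each $i$, one applies $CR_Y(\eta)$ between $q_i$ (control) and $a_i$ (target), then strongly measures $a_i$, and the Kraus operators of Eq.~(\ref{eq:Kraus_operators}) act on the intended two-dimensional subspace. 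Adding the three contributions gives $k + m + k = \lceil \log_2(L-k)\rceil + 2k = N(L)$ qubits, as claimed.

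The main obstacle, should one demand more than the proposition states, is compiling the tight-binding unitary $e^{-iHt}$ with Hamiltonian (\ref{eq:Hamiltonian_ring_L}) under this hybrid encoding: hopping terms that cross the interface between the marker register and the binary register translate into multi-qubit controlled operations whose depth depends on $m$. Because the statement only asserts the existence of an encoding with the given qubit count, and not its gate-efficient realization, I would flag this as a separate issue rather than expand the proof. I would additionally remark on the degenerate case $L-k \leq 1$, where $m = 0$, the binary register is empty, and the (at most one) unmeasured site is identified with the all-zero marker string; the construction still yields exactly $N(L)$ qubits.
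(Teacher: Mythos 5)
Your construction is correct and is essentially the same as the paper's: binary-encode the $L-k$ unmeasured sites on $\lceil\log_2(L-k)\rceil$ qubits, give each of the $k$ measured sites its own dedicated (one-hot) qubit so the controlled $R_Y(\eta)$ can target it, and attach one ancilla per measured site, for a total of $\lceil\log_2(L-k)\rceil+2k$. Your added checks of injectivity and of the degenerate case $L-k\leq 1$ go slightly beyond the paper's terse argument but do not change the approach.
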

\begin{proof}
Our goal is to keep the number of qubits $N\in\mathbb{N}$ as small as possible, where $N$ is a function of the ring length $L$. 
Assuming, we measure $k<L$ qubits weakly, we need an additional register of $k$ ancilla qubits (one ancilla per site). We call the number of sites that we do not measure $n:=L-k$, thus $L=n+k.$ For the first $n$ sites we can choose the canonical bit-to-qubit mapping to occupy n modes. That is, take the binary representation of each $i\in [n]$, i.e. $0 = \ket{000..0_n 000...00}$, $1=\ket{000....1_n 000...00}$, $2 = \ket{00...10_n 000...00}$ and so on. In order to get the number of qubits needed for this, just compute $\lceil \log_2(n) \rceil = \lceil \log_2(L-k) \rceil$. 
Since there are $2k$ qubits left, the number of required qubits is $N(L) = \lceil \log_2(L-k) \rceil + 2k.$
\end{proof}
\subsubsection{State after the weak measurement}
\begin{figure}[h]
     \centering
     \begin{subfigure}[b]{0.496\textwidth}
         \begin{align*}
        \Qcircuit @C=1.5em @R=2.7em {
        \lstick{\ket{\phi}} & 
        \gate{U(t)} & \meter \cwx[1] \barrier[-0.6em]{1} & \qw \\ 
        & \cw & \cw & \cw \\
        }
        \end{align*}
        \caption{\justifying One iteration of the MCTQW with a projection-valued measurement on the system. We refer to this type of measurement as a \textbf{strong} measurement.}
        \label{fig:one_iteration_projective_ctqw}
     \end{subfigure}
     \hfill
     \begin{subfigure}[b]{0.496\textwidth}
         \begin{align*}
        \Qcircuit @C=1.1em @R=2.3em {
        \lstick{\ket{\phi}} & \gate{U(t)} & \ctrl{1} & \qw & \qw \barrier{2} & \qw &  \\
        \lstick{\ket{0}} & \qw & \gate{R_Y(\eta)}  & \meter \cwx[1] & \qw & \qw & \\ 
         & \cw & \cw & \cw \cwx[-1] & \cw & \cw \\
        }
        \end{align*}
        \caption{\justifying One iteration of the MCTQW with a weak measurement: projection-valued measurement on the ancilla system corresponds to a weak measurement on the quantum system. We refer to this type of measurement as a \textbf{weak} measurement.}
        \label{fig:one_iteration_weak_ctqw}
     \end{subfigure}
    \caption{(a) \textbf{Strong} and (b) the \textbf{weak} measurements.}
\end{figure}
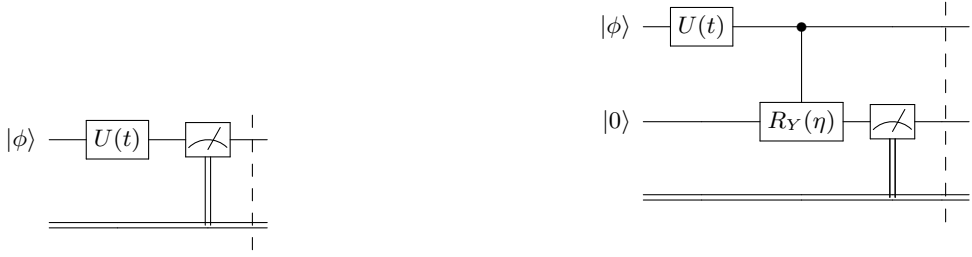
We consider a strong (Figure \ref{fig:one_iteration_projective_ctqw}) and an weak measurement (Figure \ref{fig:one_iteration_weak_ctqw}). Let us consider the post-measurement state. 
Let $\ket{\psi(t)} = U(t)\ket{\phi}_{\text{ctrl-qubit}}$ denote the $t$-dependent state after one iteration of $U(t)$ on the control qubit.
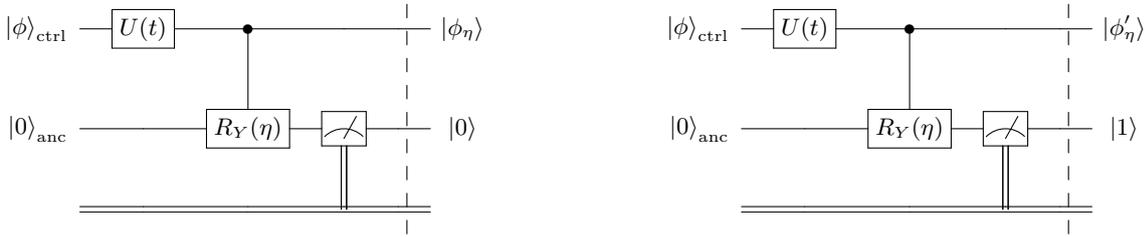
\begin{figure}[h]
     \centering
     \begin{subfigure}[b]{0.496\textwidth}
        \begin{align*}
        \Qcircuit @C=1.3em @R=2.5em {
        \lstick{\ket{\phi}_{\text{ctrl}}} & \gate{U(t)} & \ctrl{1} & \qw & \qw \barrier{2} & \qw & \ket{\phi_{\eta}} \\
        \lstick{\ket{0}_{\text{anc}}} & \qw & \gate{R_Y(\eta)}  & \meter \cwx[1] & \qw & \qw & \ket{0} \\ 
         & \cw & \cw & \cw \cwx[-1] & \cw & \cw \\
        }
        \end{align*}
        \caption{state after the "NO" event: $\ket{\Psi(\eta)} = \ket{\phi_{\eta}}\otimes\ket{0}$}
     \end{subfigure}
     \hfill
     \begin{subfigure}[b]{0.496\textwidth}
          \begin{align*}
        \Qcircuit @C=1.3em @R=2.5em {
        \lstick{\ket{\phi}_{\text{ctrl}}} & \gate{U(t)} & \ctrl{1} & \qw & \qw \barrier{2} & \qw & \ket{\phi'_{\eta}} \\
        \lstick{\ket{0}_{\text{anc}}} & \qw & \gate{R_Y(\eta)}  & \meter \cwx[1] & \qw & \qw & \ket{1} \\ 
         & \cw & \cw & \cw \cwx[-1] & \cw & \cw \\
        }
        \end{align*}
        \caption{state after the "YES" event $\ket{\Psi(\eta)} = \ket{\phi'_{\eta}}\otimes\ket{1}$}
     \end{subfigure}
    \caption{State before and after it is measured weakly}
\end{figure}
\begin{proposition}\label{prop:state_after_measurement}
    Let $\ket{\psi_t} =  
         \begin{pmatrix}
           \alpha_t \\
           \beta_t
         \end{pmatrix}
         $
    be some arbitrary state on the control qubit after $U(t)$ has been applied. Then, the state after the null-type weak measurement on the ancilla qubit is a separable, sub-normalized, two-qubit quantum state of the form
    \begin{equation}
  \ket{\Psi(\eta)} = \begin{cases}
        \ket{\phi'_{\eta}}\otimes\ket{1}, \, \text{if the particle has been detected, i.e. "YES"} \\
         \ket{\phi_{\eta}}\otimes\ket{0}, \, \text{if the particle has not been detected, i.e. "NO"}.
        \end{cases}
    \end{equation}
    Here, $\ket{\phi'_{\eta}}:= 
        \begin{pmatrix}
           0 \\
           \beta_t b(\eta)
         \end{pmatrix} $
    and 
    $\ket{\phi_{\eta}}:= 
        \begin{pmatrix}
           \alpha_t \\
           \beta_t a(\eta)
         \end{pmatrix} $, 
         where $a(\eta):=\sqrt{1-\eta}$ and $b(\eta):= \sqrt{\eta}$ originate from the weak measurement, for some $\eta \in (0,1].$ 
\end{proposition}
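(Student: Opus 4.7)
The plan is a direct computation in the computational basis, since the proposition is essentially the unpacking of the definitions of $CR_Y(\eta)$ and the ancilla projections. First I would write the pre-measurement two-qubit state (control qubit first, ancilla second) as
\begin{equation*}
\ket{\psi_t}\otimes\ket{0} \;=\; \alpha_t\ket{00} + \beta_t\ket{10} \;\equiv\; \begin{pmatrix}\alpha_t \\ 0 \\ \beta_t \\ 0\end{pmatrix}.
\end{equation*}
Next I would apply $CR_Y(\eta)$ from its explicit matrix form. Using $g(\eta)=\arcsin(\sqrt{\eta})$, so that $\cos g(\eta)=\sqrt{1-\eta}$ and $\sin g(\eta)=\sqrt{\eta}$, the nontrivial block acts only on the $\ket{10},\ket{11}$ subspace and yields
\begin{equation*}
CR_Y(\eta)\bigl(\ket{\psi_t}\otimes\ket{0}\bigr) \;=\; \alpha_t\ket{00} + \beta_t\sqrt{1-\eta}\,\ket{10} + \beta_t\sqrt{\eta}\,\ket{11}.
\end{equation*}

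The second step is to apply the ancilla-side projectors from Eq.~(\ref{eq:p_and_q}), namely $\idty\otimes Q=\idty\otimes\ketbra{0}{0}$ for the "no" outcome and $\idty\otimes P=\idty\otimes\ketbra{1}{1}$ for the "yes" outcome, without renormalization (the sub-normalization is exactly what encodes the branch probabilities feeding Eq.~(\ref{eq:eta_return_probability})). The "no" projector keeps the two summands with $\ket{0}$ on the ancilla, giving the separable vector $\bigl(\alpha_t\ket{0}+\beta_t\sqrt{1-\eta}\,\ket{1}\bigr)\otimes\ket{0}$; the "yes" projector keeps the single summand with $\ket{1}$ on the ancilla, giving $\bigl(\beta_t\sqrt{\eta}\,\ket{1}\bigr)\otimes\ket{1}$. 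Reading off the control-qubit factors and setting $a(\eta):=\sqrt{1-\eta}$, $b(\eta):=\sqrt{\eta}$ delivers the claimed forms of $\ket{\phi_\eta}$ and $\ket{\phi'_\eta}$.

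There is no substantial obstacle here; the only things to verify carefully are the tensor-factor ordering in the matrix expression of $CR_Y(\eta)$ and the identification of the trigonometric coefficients with the Kraus-operator entries of Eq.~(\ref{eq:Kraus_operators}). A brief concluding remark worth including is that in both branches the post-measurement two-qubit vector is genuinely a product state, with the ancilla factor being a deterministic computational-basis state ($\ket{0}$ or $\ket{1}$). This is the structural fact that (i) makes the unconditional reset of the ancilla after a "no" outcome well defined, and (ii) allows the whole weak-measurement protocol to be rewritten purely on $\hil_{\text{sys}}$ via the sheared projectors $P_\eta,Q_\eta$ used throughout Section~\ref{section:Indirect_Measurements}, in perfect agreement with Eq.~(\ref{eq:shear_projectors}).
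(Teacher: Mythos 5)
Your proposal is correct and follows essentially the same route as the paper's own proof: apply the controlled $R_Y(\eta)$ rotation to $\ket{\psi_t}\otimes\ket{0}$ (you use the explicit $4\times 4$ matrix, the paper uses the block form $\ketbra{0}{0}\otimes\idty+\ketbra{1}{1}\otimes R_Y(\eta)$, which is the same computation), then project the ancilla with $P=\ketbra{1}{1}$ and $Q=\ketbra{0}{0}$ without renormalizing, and read off the control-qubit factors. Your concluding remark about separability and the well-definedness of the reset matches the comment the paper makes immediately after its proof.
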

\begin{proof}
    The weak measurement consists of two steps. First, we apply the controlled $R_y$ rotation on the ancilla qubit and second, we measure the ancilla in the computational basis. 
    The controlled $R_y$ rotation applied to the state $\ket{\psi_t}\otimes\ket{0}_{\text{anc}}$ gives:
    \begin{align*}
        V(\eta)\left(\ket{\psi_t}\otimes \ket{0}\right) &= (\ketbra{0}{0}\otimes\mathds{1} + \ketbra{1}{1}\otimes U(\eta))(\ket{\psi_t}\otimes \ket{0}) = \alpha_t\ket{0}\ket{0}+\beta_t\ket{1}(a(\eta)\ket{0}+b(\eta)\ket{1})
    \end{align*}
    We define the state after this rotation by $\ket{\varphi_t}:=\alpha_t\ket{0}\ket{0}+\beta_t\ket{1}(a(\eta)\ket{0}+b(\eta)\ket{1}).$
    Now, the measurement is applied to the ancilla qubit. Let $Q = \ketbra{0}{0}$ and $P=\ketbra{1}{1}$, so that we get
    \begin{align*}
        P \ket{\varphi_t} = \beta_t b(\eta)\ket{1}\otimes\ket{1}, \qquad 
         Q \ket{\varphi_t} = (\alpha_t\ket{0} + \beta_t a(\eta)\ket{1})\otimes\ket{0}. 
    \end{align*}
    Note that the projector $P$ corresponds to a "YES" and $Q$ to a "NO" measurement. By defining $\ket{\phi'_{\eta}}:=\beta_t b(\eta)\ket{1}$ and $\ket{\phi_{\eta}}:=\alpha_t\ket{0} + \beta_t a(\eta)\ket{1}$, we have the assertion.  
\end{proof}
The post-measurement state is obtained by normalizing $P\ket{\varphi_t},$ respectively $Q\ket{\varphi_t}$. Proposition \ref{prop:state_after_measurement} shows that there is no additional reset operation needed in the protocol. Mathematically, the ancilla remains in state $\ket{0}$ after every unsuccessful measurement. 
\subsection{Weak Measurements: Subnormalized States and Subspace Recurrence}\label{subseq:appendix_weak_measurements}
\subsubsection{Shear Mapping of the Projection Operators}\label{subseq:appendix_shear_mapping}
By Proposition \ref{prop:state_after_measurement}, we can now say how the orthogonal projections are stretched to those linear mappings that represent the weak measurement. 
The transformation matrix from (\ref{eq:shear_projectors}),
$
    A_{\eta} := \begin{pmatrix}
        \sqrt{\eta} & 0 \\
        \sqrt{1-\eta} & 1
    \end{pmatrix}
$
is singular if and only if $\eta = 0$. For $\eta = 1$, $A_1 = \mathds{1}_2$ and all other values $0<\eta<1$ induce a shear mapping of the observables. Since the weak measurement is unitarily equivalent to a phase-flip channel, a picture of how the above mapping acts on qubits is seen e.g. in Chapter 8 from Nielsen and Chuang \cite{nielsen2010quantum}.

It is recaptured that the encoding is invariant under any unitary basis transformation.
Let $B$ denote a unitary complex $2\times2$ matrix, which transforms the computational basis, i.e. 
\begin{equation*}
    P \mapsto B^{\dagger} P B =: P', \qquad Q\mapsto B^{\dagger} Q B =: Q'. 
\end{equation*}
Then, using Eq. (\ref{eq:shear_projectors}),
\begin{align}\label{eq:shear_projectors_basis_transform}
\begin{split}
    P_{\eta} &\mapsto B^{\dagger} P_{\eta} B = \sqrt{\eta} B^{\dagger} P B = \sqrt{\eta} P' \equiv P'_{\eta}, \\
    Q_{\eta} &\mapsto B^{\dagger} Q_{\eta} B = B^{\dagger} \left(Q+\sqrt{1-\eta} P\right) B = B^{\dagger} Q B + \sqrt{1-\eta} B^{\dagger} P B = Q' + \sqrt{1-\eta} P' \equiv Q'_{\eta}.
\end{split}
\end{align}
An important application is, when the diagonal basis $\{\ket{+}, \ket{-}\}$ is chosen.  In this case, $B = H$, where $H$ is the \textit{Hadamard gate} so that
$\ket{0} \mapsto H \ket{0} =: \ket{+}$ and $\ket{1} \mapsto H \ket{1} =: \ket{-}$.
Explicitly,
$H = \frac{1}{\sqrt{2}} \begin{pmatrix}
    1 & 1 \\
    1 & -1
\end{pmatrix}$, yielding with Eq. (\ref{eq:shear_projectors_basis_transform})
\begin{align*}
    P'_{\eta} = \frac{1}{2} \begin{pmatrix}
        \sqrt{\eta} & -\sqrt{\eta} \\
        -\sqrt{\eta} & \sqrt{\eta} \\
    \end{pmatrix},
    \qquad
     Q'_{\eta} = \frac{1}{2} \begin{pmatrix}
        1+ \sqrt{1-\eta} & 1-\sqrt{1-\eta} \\
        1-\sqrt{1-\eta} & 1+\sqrt{1-\eta} \\
    \end{pmatrix}.
\end{align*}
\begin{definition}\label{def:weak_mappings}
Consider $S^{\sqrt{\eta}}_V=\{\ket{\psi}\in V: \norm{\ket{\psi}} = \sqrt{\eta}, \, \forall \,\eta \in (0,1] \}.$ The orthogonal projection $P:\mathcal{H}\to V$ gives a linear (but not projection-valued) mapping $P_{\eta}:\mathcal{H}\to V, P_{\eta}:=\sqrt{\eta} P$. Similarly, the mapping $Q_{\eta}$ is given by (\ref{eq:shear_projectors}), i.e. $Q_{\eta}=Q+\sqrt{1-\eta}P$.
\end{definition}
\begin{proposition}\label{prop:yes_no_condition}
Consider $S^{\sqrt{1-\eta}}_V= \{\ket{\psi}\in V: \norm{\ket{\psi}} = \sqrt{1 - \eta}\}$ and $S^{\sqrt{\eta}}_V= \{\ket{\psi}\in V: \norm{\ket{\psi}} = \sqrt{\eta}\}$. 
    \begin{enumerate}
        \item[(i)] Let $\ket{\psi}\in S_V$. Then, $P_{\eta}\ket{\psi}\in S^{\sqrt{\eta}}_V$ and $Q_{\eta} \ket{\psi} \in  S^{\sqrt{1-\eta}}_V$.
        \item[(ii)] Let $\ket{\psi}\in S_{V^{\perp}}$. Then, $P_{\eta}\ket{\psi} = 0$ and $Q_{\eta}\ket{\psi}\in S_{V^{\perp}}$.
    \end{enumerate}
\end{proposition}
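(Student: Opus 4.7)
The statement is essentially a direct verification from the definitions of $P_\eta$ and $Q_\eta$ given in Eq.~(\ref{eq:shear_projectors}), combined with the orthogonality $PQ = QP = 0$ and the idempotence $P^2 = P$, $Q^2 = Q$. There is no subtle step; the only ingredient beyond linear algebra is to keep track of which subspace the image lands in.

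The plan is as follows. For part (i), fix $\ket{\psi} \in S_V$, so that $P\ket{\psi} = \ket{\psi}$ and $Q\ket{\psi} = 0$. Then directly compute
\begin{equation*}
P_\eta \ket{\psi} \;=\; \sqrt{\eta}\, P\ket{\psi} \;=\; \sqrt{\eta}\, \ket{\psi},
\qquad
Q_\eta \ket{\psi} \;=\; Q\ket{\psi} + \sqrt{1-\eta}\, P\ket{\psi} \;=\; \sqrt{1-\eta}\, \ket{\psi}.
\end{equation*}
Both vectors are scalar multiples of $\ket{\psi}\in V$, so they lie in $V$, and taking norms gives $\|P_\eta\ket{\psi}\| = \sqrt{\eta}$ and $\|Q_\eta\ket{\psi}\| = \sqrt{1-\eta}$, placing them in $S^{\sqrt{\eta}}_V$ and $S^{\sqrt{1-\eta}}_V$ respectively.

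For part (ii), fix $\ket{\psi}\in S_{V^\perp}$, so that $P\ket{\psi} = 0$ and $Q\ket{\psi} = \ket{\psi}$. Then
\begin{equation*}
P_\eta \ket{\psi} \;=\; \sqrt{\eta}\, P\ket{\psi} \;=\; 0,
\qquad
Q_\eta \ket{\psi} \;=\; Q\ket{\psi} + \sqrt{1-\eta}\, P\ket{\psi} \;=\; \ket{\psi} \;\in\; S_{V^\perp}.
\end{equation*}

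The only conceptual point worth emphasizing, rather than an obstacle, is that $Q_\eta$ does not move vectors between $V$ and $V^\perp$: on $V$ it acts as multiplication by $\sqrt{1-\eta}$, and on $V^\perp$ it acts as the identity. This is precisely what makes the operator-valued measurement implementable within $\mathcal{H}_{\text{sys}}$ alone, and it is the key structural property that is invoked later in Section~\ref{subsubseq:weak_first_return_amplitude_operator} when defining $\Tilde{U}_\eta(t) = Q_\eta U_t$ and showing that the survival operator preserves the decomposition $\mathcal{H} = V \oplus V^\perp$ in the appropriate graded sense. Since everything reduces to evaluating $P\ket{\psi}$ and $Q\ket{\psi}$ on the two complementary subspaces, the proof fits in a few lines with no technical obstruction.
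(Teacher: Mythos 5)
Your proof is correct and follows exactly the same route as the paper's: both simply evaluate $P_\eta = \sqrt{\eta}\,P$ and $Q_\eta = Q + \sqrt{1-\eta}\,P$ on a vector in $V$ (where $P\ket{\psi}=\ket{\psi}$, $Q\ket{\psi}=0$) and on a vector in $V^\perp$ (where the roles reverse), then read off the norms. Nothing is missing.
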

\begin{proof}
   The statements follow immediately by definition of $P_{\eta}$ and $Q_{\eta}$: 
   \begin{itemize}
       \item[(i)] Let $\ket{\psi}\in S_V$. Then $P_{\eta}\ket{\psi} = \sqrt{\eta}\ket{\psi}\in S_V^{\sqrt{\eta}}.$ Moreover, $Q_{\eta}\ket{\psi} = \sqrt{1-\eta}\ket{\psi}\in S_V^{\sqrt{1-\eta}}.$ 
    \item[(ii)] Let $\ket{\psi}\in S_{V^{\perp}}$. Then $P_{\eta}\ket{\psi} = \sqrt{\eta}P\ket{\psi} = 0.$ Moreover, $Q_{\eta}\ket{\psi} = Q \ket{\psi} = \ket{\psi}\in S_{V^{\perp}}.$ 
   \end{itemize}
\end{proof}
\subsubsection{Weak Recurrence}
For shorter notation, we sometimes abbreviate $U:=U(t)$ and $\Tilde{U}_{\eta}:=\Tilde{U}_{\eta}(t)$. It is mentioned that any unitary operator $U$ depends continuously on time $t\in\mathbb{R},$ i.e. $U\equiv U_t = U(t) = e^{-iHt}$.
\begin{lemma}\label{lem:return_prob_and_time}
Let $\ket{\psi}\in S_V$ be an initial state and $U(t)$ be a unitary operator. Then, after measuring $N\in\mathbb{N}$ times, the $\eta-$dependent return probability to $V$, denoted by $R_N(\eta, t)$ and the expected first return time, denoted by $\tau_N(\eta, t)$ are
\begin{equation*}\label{eq:return_prob_eta_appendix}
    R_N(\eta, t) = 1 - \norm{\Tilde{U}_{\eta}(t)^N\ket{\psi}}^2, 
\end{equation*}
\begin{equation*}\label{eq:return_time_eta_appendix}
    \tau_N(\eta, t)=\frac{\sum_{k=0}^{N-1}\norm{\Tilde{U}_{\eta}(t)^k\ket{\psi}}^2 - N \norm{\Tilde{U}_{\eta}(t)^N\ket{\psi}}^2}{1-\norm{\Tilde{U}_{\eta}(t)^N\ket{\psi}}^2}.
\end{equation*}
\end{lemma}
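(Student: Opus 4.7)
The plan is to derive a telescoping identity for $p_n(\eta, t)$ based on a Kraus-type completeness condition satisfied by the sheared operators $P_\eta, Q_\eta$, and then apply summation by parts to obtain the closed form for $\tau_N(\eta, t)$. The starting point is the identity
\begin{equation*}
    P_\eta^\dagger P_\eta + Q_\eta^\dagger Q_\eta = \mathds{1},
\end{equation*}
which is the Kraus completeness for the phase-damping channel (\ref{eq:Kraus_operators}). This is distinct from the pseudo-completeness (\ref{eq:pseudo_complete_kraus}) and is the crucial quadratic relation; it follows by direct computation from (\ref{eq:peta_and_qeta}) using $P^2 = P$, $Q^2 = Q$, $PQ = 0$.

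Taking the expectation of this identity in the state $\ket{\phi} := U(t)\Tilde{U}_\eta(t)^{n-1}\ket{\psi}$, and using $Q_\eta U(t) = \Tilde{U}_\eta(t)$ together with the unitarity of $U(t)$, I would obtain
\begin{equation*}
    \norm{P_\eta U(t) \Tilde{U}_\eta(t)^{n-1}\ket{\psi}}^2 = \norm{\Tilde{U}_\eta(t)^{n-1}\ket{\psi}}^2 - \norm{\Tilde{U}_\eta(t)^n \ket{\psi}}^2.
\end{equation*}
Because $\ket{\psi}\in S_V$ satisfies $P\ket{\psi}=\ket{\psi}$, the rightmost $P$ in (\ref{eq:weak_first_return_amplitude_op}) may be dropped, so the left-hand side is precisely $p_n(\eta, t)$. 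Summing the resulting telescope from $n=1$ to $N$ and using $\norm{\ket{\psi}}=1$ yields the first claim,
\begin{equation*}
    R_N(\eta, t) = \sum_{n=1}^N p_n(\eta, t) = 1 - \norm{\Tilde{U}_\eta(t)^N \ket{\psi}}^2.
\end{equation*}

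For the second formula I would interchange the order of summation by writing $n = \sum_{k=0}^{n-1} 1$ in the numerator of $\tau_N(\eta, t)$:
\begin{equation*}
    \sum_{n=1}^N n\, p_n(\eta, t) = \sum_{k=0}^{N-1} \sum_{n=k+1}^{N} p_n(\eta, t) = \sum_{k=0}^{N-1}\bigl(\norm{\Tilde{U}_\eta(t)^k \ket{\psi}}^2 - \norm{\Tilde{U}_\eta(t)^N\ket{\psi}}^2\bigr),
\end{equation*}
where the inner sum is computed by telescoping the step identity from $n=k+1$ to $N$. Collecting the $N$ identical boundary terms gives $\sum_{k=0}^{N-1}\norm{\Tilde{U}_\eta(t)^k\ket{\psi}}^2 - N\norm{\Tilde{U}_\eta(t)^N\ket{\psi}}^2$, and dividing by the already-computed $R_N(\eta, t)$ proves the claim.

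The derivation is conceptually routine; the only subtle point — and the only place the weak nature of the measurement enters nontrivially — is recognizing that while $P_\eta$ and $Q_\eta$ fail to resolve the identity linearly (cf.~(\ref{eq:pseudo_orthogonal})), they do so at the quadratic Kraus level $P_\eta^\dagger P_\eta + Q_\eta^\dagger Q_\eta = \mathds{1}$. This is precisely what guarantees that $(\norm{\Tilde{U}_\eta(t)^n \ket{\psi}}^2)_{n\geq 0}$ is monotonically non-increasing and that the $p_n(\eta, t)$ are consecutive differences of survival probabilities, so that the strong-measurement derivation of \cite{bourgain2014quantum} transfers verbatim to the sheared setting without any recourse to dilation.
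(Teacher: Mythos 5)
Your proof is correct, and it reaches the key telescoping identity $p_n(\eta,t)=\norm{\Tilde{U}_\eta(t)^{n-1}\ket{\psi}}^2-\norm{\Tilde{U}_\eta(t)^{n}\ket{\psi}}^2$ by a genuinely different and substantially shorter route than the paper. The paper first establishes the linear \emph{pseudo-orthogonal decomposition} $\Hat{a}^{\eta}_n + \Tilde{U}^{n}_{\eta} P = (\mathds{1} + f(\eta) P)\, U \Tilde{U}^{n-1}_{\eta} P$, then writes $\Hat{a}^{\eta}_n\ket{\psi}=a+b$ and expands $\norm{a+b}^2=\norm{a}^2+\norm{b}^2+2\Re\langle a,b\rangle$ term by term, tracking the auxiliary functions $f(\eta)$ and $g(\eta)$ and verifying that the combination $h(\eta)=\tfrac{1}{\eta}\left(2g(\eta)+f^2(\eta)+2g(\eta)f(\eta)\right)$ equals $2$ identically, so that the spurious $\eta$-dependent multiples of $\norm{\Hat{a}^{\eta}_n\ket{\psi}}^2$ cancel. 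You bypass all of this by invoking the quadratic Kraus completeness $P_\eta^\dagger P_\eta + Q_\eta^\dagger Q_\eta = \mathds{1}$ (which indeed follows from $P^2=P$, $Q^2=Q$, $PQ=QP=0$), evaluating it in the state $U(t)\Tilde{U}_\eta(t)^{n-1}\ket{\psi}$, and using unitarity; this makes transparent that it is the POVM normalization at the quadratic level --- not the linear pseudo-completeness (\ref{eq:pseudo_complete_kraus}) --- that preserves the telescoping structure, and it yields the monotonicity of the survival probabilities for free. (The paper does implicitly use $Q_\eta^\dagger Q_\eta=\mathds{1}-\eta P$ later, in the proof of Corollary \ref{cor:total_probability_weak}, but not in the proof of this lemma.) The remaining steps --- telescoping the sum to obtain $R_N$, and the interchange of summation $\sum_{n=1}^N n\,p_n=\sum_{k=0}^{N-1}\sum_{n=k+1}^N p_n$ --- coincide with the paper's summation-by-parts argument. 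What the paper's longer computation buys is the explicit operator-level decomposition paralleling Bourgain et al., which displays how the perturbation $f(\eta)P$ enters; what your argument buys is brevity and a clearer structural reason why the strong-measurement derivation transfers to the sheared setting.
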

\begin{proof}
We follow the procedure of \cite{bourgain2014quantum} and compute an anologous decomposition of the form $\Hat{a}_n + \Tilde{U}^n P = U \Tilde{U}^{n-1} P.$ 
Naturally, the "pseudo-orthogonality" of (\ref{eq:pseudo_orthogonal}) is inherited in such a decomposition for weak measurements. 
For $\ket{\psi}\in S_V$, Proposition \ref{prop:yes_no_condition} yields  $P_{\eta}\ket{\psi}\in S^{\sqrt{\eta}}_V$.  
The $\eta$-dependent quantities become (compare with Def. \ref{def:weak_mappings})
\begin{align*}
    P \mapsto P_{\eta}, \qquad 
    \Tilde{U} \mapsto \Tilde{U}_{\eta}:= Q_{\eta}U, \qquad
    \Hat{a}_n \mapsto \Hat{a}^{\eta}_{n} := P_{\eta} U \Tilde{U}^{n-1}_{\eta} P.
\end{align*}
Recall that the projector on the right hand side controls that $\ket{\psi}\in ran \,P.$ This is not a measurement after the zero'th step, so that there is no $\eta$-term arising here. The modified decomposition reads 
\begin{equation}\label{eq:pseudo_orthogoanl_decomp}
    \Hat{a}^{\eta}_n + \Tilde{U}^{n}_{\eta} P = (\mathds{1} + f(\eta) P) U \Tilde{U}^{n-1}_{\eta} P.
\end{equation}
Details are presented in Appendix \ref{proof:lem_return_prob_and_time}. In order to transfer the construction from \cite{bourgain2014quantum}, we have to show that the telescoping series for the return probability is preserved in the case of measuring weakly. 
Indeed, one computes (see Appendix \ref{proof:lem_return_prob_and_time})
\begin{equation}\label{eq:single_return_prob}
    p^{\eta}_n(t):=\norm{\Hat{a}^{\eta}_n(t)\ket{\psi}}^2 = \norm{\Tilde{U}_{\eta}(t)^{n-1}\ket{\psi}}^2 - \norm{\Tilde{U}_{\eta}(t)^n\ket{\psi}}^2.
\end{equation}
By definition of the overall $V-$return probability \cite{bourgain2014quantum} and the above telescoping sum,
\begin{equation*}
     R_N(\eta, t):= \sum_{n=1}^{N} p^{\eta}_n(t) =  \sum_{n=1}^{N}  \norm{\Hat{a}^{\eta}_n(t)\ket{\psi}}^2  = 1 - \norm{\Tilde{U}_{\eta}(t)^N\ket{\psi}}^2,
\end{equation*}
which shows (\ref{eq:return_prob_eta}).
We insert the above expression for the expected first V-return time \cite{grunbaum2013recurrence}, i.e. 
$\tau_N = \frac{\sum_{n=1}^{N} n\, p_n}{R_N},$ so that we have for $\tau_N(\eta, t)$:
\begin{equation*}
    \tau_N(\eta, t) = \frac{\sum_{n=1}^{N} n \, \norm{\Hat{a}^{\eta}_n(t)\ket{\psi}}^2} {\sum_{n=1}^{N} \norm{\Hat{a}^{\eta}_n(t) \ket{\psi}}^2} = \frac{\sum_{k=0}^{N-1}\norm{\Tilde{U}_{\eta}(t)^k\ket{\psi}}^2 - N \norm{\Tilde{U}_{\eta}(t)^N\ket{\psi}}^2}{1-\norm{\Tilde{U}_{\eta}(t)^N\ket{\psi}}^2}.
\end{equation*}
\end{proof}
\begin{corollary}\label{cor:total_probability_weak}
    Let $\ket{\psi}\in S_V$ be $V-$recurrent, i.e. $R_{\infty}(t)=1.$ Then, $\ket{\psi}$ is also $V-$recurrent in the weak measurement protocol, i.e. $R_{\infty}(\eta, t) = 1$, for all $\eta \in (0,1]$.
\end{corollary}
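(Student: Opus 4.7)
My plan is to reduce the statement to an operator-theoretic claim about the survival operator and then exploit that its ``unitary part'' is $\eta$-independent on $(0,1]$. By Lemma~\ref{lem:return_prob_and_time}, $R_\infty(\eta,t)=1-\lim_N\|\Tilde{U}_\eta(t)^N\ket{\psi}\|^2$, and the hypothesis $R_\infty(t)=1$ reads $\lim_N\|\Tilde{U}_1(t)^N\ket{\psi}\|^2=0$. It therefore suffices to show that for $\ket{\psi}\in S_V$ and every $\eta\in(0,1]$,
\begin{equation*}
\|\Tilde{U}_1^N\ket{\psi}\|\to 0 \;\Longrightarrow\; \|\Tilde{U}_\eta^N\ket{\psi}\|\to 0 .
\end{equation*}

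Using $Q_\eta^2=Q+(1-\eta)P$, a short calculation gives $\Tilde{U}_\eta^{\dagger}\Tilde{U}_\eta=\mathds{1}-\eta\, U^\dagger P U$, so $\Tilde{U}_\eta$ is a contraction whose defect is controlled by $PU$. The decisive step is a spectral observation: the unit-modulus eigenvectors of $\Tilde{U}_\eta$ are the \emph{same} for all $\eta\in(0,1]$. Indeed, if $\Tilde{U}_\eta\phi=\lambda\phi$ with $|\lambda|=1$, saturating the contraction bound forces $\|PU\phi\|=0$; this collapses the survival operator to $\Tilde{U}_\eta\phi=QU\phi=U\phi=\lambda\phi$, and $P\phi=\lambda^{-1}PU\phi=0$, so $\phi\in V^\perp$ is an eigenvector of $U$ with eigenvalue $\lambda$. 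Conversely, any such $\phi$ is immediately an eigenvector of every $\Tilde{U}_\eta$ (including $\eta=1$) with the same eigenvalue. Consequently the unit-modulus eigenvectors of $\Tilde{U}_\eta$ and of $\Tilde{U}_1$ form the same set, namely the eigenvectors of $U$ in $V^\perp$ lying on the unit circle.

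To conclude, I invoke the finite-dimensional Sz.-Nagy--Foias decomposition: any contraction $T$ on a finite-dimensional Hilbert space admits an orthogonal splitting $\mathcal{H}=H_u(T)\oplus H_u(T)^\perp$, where $H_u(T)$ is the orthogonal span of the (automatically semi-simple) unit-modulus eigenvectors of $T$, on which $T$ acts unitarily; moreover $\|T^N x\|\to 0$ iff $x\perp H_u(T)$. By the previous paragraph, $H_u(\Tilde{U}_\eta)=H_u(\Tilde{U}_1)$, so the two conditions $\|\Tilde{U}_\eta^N\ket{\psi}\|\to 0$ and $\|\Tilde{U}_1^N\ket{\psi}\|\to 0$ are equivalent, and the hypothesis yields the conclusion.

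The main obstacle is the spectral identification of $H_u(\Tilde{U}_\eta)$ with $H_u(\Tilde{U}_1)$; it depends crucially on $\eta>0$, since the factor of $\eta$ in $\mathds{1}-\eta\, U^\dagger P U$ is precisely what forces $PU\phi=0$ from the eigenvalue equation, and also what makes the protocol degenerate at $\eta=0$, where $\Tilde{U}_0=U$ is unitary and no recurrence occurs. A more elementary alternative using the energy identity $\|\Tilde{U}_\eta^{N+1}\ket{\psi}\|^2=\|\Tilde{U}_\eta^N\ket{\psi}\|^2-\eta\|PU\Tilde{U}_\eta^N\ket{\psi}\|^2$ together with finite-dimensional compactness would produce a limit vector $\phi$ with $PU\phi=0$, but relating its $\Tilde{U}_1$-orbit back to the original $\ket{\psi}$ appears delicate, so the spectral route is the cleanest.
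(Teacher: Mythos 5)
Your proof is correct, but it takes a genuinely different route from the paper's. The paper argues directly on the norms: it computes $\Tilde{U}_\eta^\dagger\Tilde{U}_\eta=U^\dagger(\mathds{1}-\eta P)U$, observes that $\norm{\Tilde{U}_\eta\ket{\tilde\psi}}=\norm{\ket{\tilde\psi}}$ forces $PU\ket{\tilde\psi}=0$ and hence a vanishing detection amplitude at that step, and from this strict monotonicity concludes (somewhat informally) that $\norm{\Tilde{U}_\eta^N\ket{\psi}}\to 0$; notably it never actually invokes the hypothesis $R_\infty(t)=1$. You instead transfer the $\eta=1$ statement to all $\eta\in(0,1]$ by identifying the unitary part of the contraction: the same saturation argument shows the unit-modulus eigenvectors of $\Tilde{U}_\eta$ are exactly the eigenvectors of $U$ lying in $V^\perp$, independently of $\eta$, and the finite-dimensional contraction decomposition ($T^*\phi=\bar\lambda\phi$ for unimodular eigenpairs, so $H_u(T)$ is reducing and $T^N\to 0$ on its complement) then makes $\norm{\Tilde{U}_\eta^N\ket{\psi}}\to 0$ equivalent for all $\eta\in(0,1]$. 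What your approach buys is rigor at precisely the point where the paper is weakest: ``strictly decreasing and $<1$'' does not by itself give a zero limit, whereas your spectral characterization does, and it moreover shows the hypothesis is automatic here, since $H_u(\Tilde{U}_\eta)\subseteq V^\perp$ while $\ket{\psi}\in V$ forces $\ket{\psi}\perp H_u(\Tilde{U}_\eta)$. What the paper's route buys is elementarity --- it needs only the one-step energy identity and no spectral machinery --- at the cost of the convergence gap that your last paragraph correctly flags as the delicate point of any compactness-based repair.
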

\begin{proof}
For $\eta=1$, the statement is trivial. Let $\eta \in (0,1).$ Then, we compute:
\begin{align*}
    R_{\infty}(\eta, t) = \lim_{N\to\infty}R_N(\eta, t) &=  \lim_{N\to\infty}\left( 1 - \norm{\Tilde{U}_{\eta}(t)^N\ket{\psi}}^2  \right) =  1 - \underbrace{\lim_{N\to\infty}\norm{\Tilde{U}_{\eta}(t)^N\ket{\psi}}^2}_{\overset{\eta>0}{\longrightarrow} 0 } =
    1.
\end{align*}
The fact that $\lim_{N\to\infty}\norm{\Tilde{U}_{\eta}(t)^N\ket{\psi}}^2 = 0$ holds true, is seen as follows:
For $\eta\in (0,1]$, we have for any unit vectors $\ket{\psi}$ and $\ket{\psi'}:=U\ket{\psi}$, 
\begin{equation}\label{eq:norm_relation}
    0 \leq \norm{\Tilde{U}_{\eta}\ket{\psi}}^2 = \bra\psi \Tilde{U}_\eta^\dagger \Tilde{U}^{}_\eta\ket{\psi}
    =\bra{\psi'}\mathds{1}-\eta P\ket{\psi'}
    =1-\eta\bra{\psi'}P P\ket{\psi'} \leq 1
    ,
\end{equation}
since $P$ is a projector (i.e., $P=P^2$).
Moreover, Eq. (\ref{eq:norm_relation}) yields
$\bra\psi \Tilde{U}_\eta^\dagger \Tilde{U}^{}_\eta\ket{\psi}=1,$ if and only if $P\ket{\psi'}=PU\ket{\psi}=0$.
Focusing on this case, we define $\ket{\Tilde{\psi}}:=\Tilde{U}_\eta^{N-1}\ket{\psi}$ and assume that $\norm{\Tilde{U}_{\eta}\ket{\Tilde{\psi}}}^2 = \bra{\Tilde{\psi}} \Tilde{U}_\eta^\dagger \Tilde{U}^{}_\eta\ket{\Tilde{\psi}}=1,$ for some initial state $\ket{\psi}\in S_V$. Then, 
\begin{align*}
1 = \bra{\psi}\Tilde{U}_{\eta}^{\dagger N} \Tilde{U}_{\eta}^{N}\ket{\psi} = \bra{\psi}\Tilde{U}_{\eta}^{\dagger N-1} \Tilde{U}_{\eta}^{\dagger} \Tilde{U}_{\eta}  \Tilde{U}_{\eta}^{N-1}\ket{\psi} = \bra{\Tilde{\psi}} \Tilde{U}_{\eta}^{\dagger} \Tilde{U}_{\eta}\ket{\Tilde{\psi}} = \bra{\Tilde{\psi}} U^{\dagger} (\mathds{1} -\eta P)U\ket{\Tilde{\psi}} = 1-\eta \bra{\Tilde{\psi}} U^{\dagger} P U\ket{\Tilde{\psi}}
\end{align*}
implies that $PU\ket{\Tilde{\psi}}=0$.
Thus, when $\ket{\Tilde{\psi}}$ is a unit vector and $\ket{\psi}\in S_V$ is an initial state,
\begin{equation*}
|\bra{\psi}U\ket{\Tilde{\psi}}|^2 = |\bra{\psi} P U\ket{\Tilde{\psi}}|^2 = |\bra{\psi}PU \Tilde{U}_\eta^{N-1}\ket{\psi}|^2 = 0,
\end{equation*}
which implies that the return probability $ |\bra{\psi} \Hat{a}_N^{\eta}\ket{\psi}|^2 := |\bra{\psi}P_{\eta}U \Tilde{U}_\eta^{N-1}\ket{\psi}|^2 = \eta|\bra{\psi}PU \Tilde{U}_\eta^{N-1}\ket{\psi}|^2 = 0.$ Hence, we conclude that $\norm{\ket{\Tilde{\psi}}} = \norm{\Tilde{U}_\eta^{N-1}\ket{\psi}} <1$ for all initial states $\ket{\psi}\in S_V$ and $N\geq 2,$ i.e. for $M:=N-1$, $\lim_{M\to\infty}\norm{\Tilde{U}_{\eta}(t)^M\ket{\psi}}^2 = 0.$

Note that, in the case of $\eta =0$ (full decoupling), $ \lim_{N\to\infty}\norm{\Tilde{U}_{\eta}(t)^N\ket{\psi}}^2 = \lim_{N\to\infty}\norm{U(t)^N\ket{\psi}}^2 = 1$, i.e. $R_{\infty}(0, t) = 0,$ coinciding with the intuition that no state is recurrent, when nothing is being measured.   
\end{proof}
\begin{corollary}\label{cor:return_prob_estimates}
Let $R_N(\eta, t)$ denote the $V-$return probability of an initial state $\ket{\psi}\in S_V$ after $N$ weak measurements with coupling strength $\eta\in (0,1]$. Then, 
$R_N(\eta, t) \leq R_N(1,t) \equiv R_N(t)$. Clearly, also: $R_{\infty}(\eta,t) \leq R_{\infty}(1,t) \equiv R_{\infty}(t)$, where $R_{\infty}(\eta,t):= \lim_{N\to\infty} R_N(\eta, t).$  
\end{corollary}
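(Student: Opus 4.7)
The plan is to invoke Lemma~\ref{lem:return_prob_and_time} to recast the claim as a monotonicity statement on survival norms, then attempt induction on $N$. Since
\begin{equation*}
R_N(\eta,t) \;=\; 1 - \norm{\tilde U_\eta(t)^N\ket{\psi}}^2,
\end{equation*}
the corollary is equivalent to $\norm{\tilde U_\eta(t)^N\ket{\psi}}^2 \ge \norm{\tilde U_1(t)^N\ket{\psi}}^2$ for every $\eta\in(0,1]$, every $N\in\mathbb{N}$, and every $\ket{\psi}\in S_V$. The $N\to\infty$ statement then follows immediately from Corollary~\ref{cor:total_probability_weak}, which already delivers $R_\infty(\eta,t)=R_\infty(1,t)=1$, so the content lies entirely in the finite-$N$ inequality.

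The base case $N=1$ is a one-step operator identity. From $P^2=P$, $Q^2=Q$, and $PQ=QP=0$ one computes $Q_\eta^\dagger Q_\eta = (Q+\sqrt{1-\eta}\,P)^2 = \mathds{1}-\eta P$, whence $\tilde U_\eta^\dagger \tilde U_\eta = U^\dagger(\mathds{1}-\eta P)U$ and, for any unit vector $\ket{\phi}$,
\begin{equation*}
\norm{\tilde U_\eta\ket{\phi}}^2 - \norm{\tilde U_1\ket{\phi}}^2 \;=\; (1-\eta)\,\norm{PU\ket{\phi}}^2 \;\ge\; 0.
\end{equation*}
Taking $\ket{\phi}=\ket\psi$ settles $N=1$.

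For the inductive step I would apply the identity above with $\ket{\phi}=\tilde U_\eta^{N-1}\ket{\psi}$, together with its $\eta=1$ counterpart evaluated at $\ket{\phi}=\tilde U_1^{N-1}\ket{\psi}$, and subtract to obtain
\begin{equation*}
\norm{\tilde U_\eta^N\ket{\psi}}^2 - \norm{\tilde U_1^N\ket{\psi}}^2 \;=\; \Bigl(\norm{\tilde U_\eta^{N-1}\ket{\psi}}^2 - \norm{\tilde U_1^{N-1}\ket{\psi}}^2\Bigr) + \Bigl(\norm{PU\tilde U_1^{N-1}\ket{\psi}}^2 - \eta\,\norm{PU\tilde U_\eta^{N-1}\ket{\psi}}^2\Bigr).
\end{equation*}
The first bracket is $\ge 0$ by the inductive hypothesis, so everything reduces to proving non-negativity of the second bracket.

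The second bracket is the main obstacle. A plain operator bound does not suffice: although $\tilde U_\eta^\dagger \tilde U_\eta \ge \tilde U_1^\dagger \tilde U_1$ as operators, $N$-fold products lose this ordering because $X\mapsto X^N$ is not operator monotone for $N\ge 2$ and $\tilde U_\eta$ is a non-normal contraction. The survival trajectory $\tilde U_\eta^{N-1}\ket\psi$ is \emph{longer} than $\tilde U_1^{N-1}\ket\psi$ while the per-step detection weight $\eta$ is \emph{smaller}, and these two effects must be balanced. To close the estimate I would expand $\tilde U_\eta = QU + \sqrt{1-\eta}\,PU$ and, exploiting $\Ima(QU)\subseteq V^\perp$ and $\Ima(PU)\subseteq V$, write $\tilde U_\eta^N\ket{\psi}$ as a sum over binary words in $\{QU,\sqrt{1-\eta}\,PU\}$ whose squared norms split orthogonally whenever consecutive letters differ. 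The spectral picture of Fig.~\ref{fig:one_over_eta_half_circ_survival_ops}(c,d)---where the eigenvalues of $\tilde U_\eta(t)$ drift monotonically toward the unit circle as $\eta\to 0$---strongly suggests that the cross-terms from equal-letter paths can be controlled via a Schur-function/spectral-measure argument in the spirit of \cite{grunbaum2013recurrence,bourgain2014quantum}. Turning this spectral drift into a pointwise vector-norm inequality at every finite $N$ is, in my view, the essential technical hurdle and likely requires the underlying structure $U(t)=e^{-iHt}$ of a genuine CTQW rather than an arbitrary unitary.
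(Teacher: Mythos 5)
Your reduction of the claim to $\norm{\Tilde{U}_{\eta}(t)^N\ket{\psi}}^2\ge\norm{\Tilde{U}_{1}(t)^N\ket{\psi}}^2$, the one-step identity $\Tilde{U}_\eta^\dagger\Tilde{U}_\eta=U^\dagger(\mathds{1}-\eta P)U$, the base case $N=1$, and the telescoping identity for the inductive step are all correct. But the proof is not complete: you explicitly leave the non-negativity of the second bracket, $\norm{PU\Tilde{U}_1^{N-1}\ket{\psi}}^2-\eta\,\norm{PU\Tilde{U}_\eta^{N-1}\ket{\psi}}^2$, as an open ``technical hurdle,'' so what you deliver is a proof for $N=1$ together with a diagnosis of where the difficulty sits, not a proof of the corollary.

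For comparison, the paper closes this gap in one line by asserting that for positive diagonal operators $\Lambda_1,\Lambda_2$ one has $\norm{((\Lambda_1+\Lambda_2)U)^N\ket{\psi}}\ge\norm{(\Lambda_1U)^N\ket{\psi}}$, applied with $\Lambda_1=Q$, $\Lambda_2=\sqrt{1-\eta}P$. That assertion is given without justification, and your worry about the cross-terms is exactly on target: the inequality does not follow from $Q_\eta^\dagger Q_\eta\ge Q^\dagger Q$, because $X\mapsto\norm{(XU)^N\ket{\psi}}$ is not monotone for $N\ge2$. Indeed, for a Hadamard-type unitary $U=\tfrac{1}{\sqrt2}\bigl(\begin{smallmatrix}1&1\\1&-1\end{smallmatrix}\bigr)$ with $\ket{\psi}=\ket{1}$ and $c=\sqrt{1-\eta}$, a direct computation gives $\norm{\Tilde{U}_\eta^2\ket{\psi}}^2=\tfrac14\left[(1-c)^2+c^2(1+c)^2\right]<\tfrac14=\norm{\Tilde{U}^2\ket{\psi}}^2$ for small $c$, so the asserted product inequality (and hence $R_2(\eta,t)\le R_2(1,t)$) fails at this level of generality. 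The obstacle you identify is therefore genuine: any correct proof must exploit more structure of the specific evolutions $U(t)=e^{-iHt}$ considered in the paper than either your sketch or the paper's one-line operator bound currently does, and your honest refusal to paper over the cross-terms is the right instinct even though it leaves the statement unproven.
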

\begin{proof}
As usual, $\eta \in (0,1].$ We distinguish between two cases. First, $\eta = 1$ and second, $0<\eta<1.$ In the first case, we always have to recover the projection-valued quantities from \cite{bourgain2014quantum}. Due to $\Tilde{U}_{\eta}:= Q_{\eta}U = (Q+\sqrt{1-\eta}P)U$, we have $\Tilde{U}_1 = QU \equiv \Tilde{U}.$ Thus, $R_N(t,1) =  1-\norm{\Tilde{U}_{1}(t)^N\ket{\psi}}^2 = 1-\norm{\Tilde{U}(t)^N\ket{\psi}}^2 \equiv R_N(t).$ We take the limit $N\to\infty$ and have $ R_{\infty}(t) = 1- \lim_{N\to\infty}\norm{\Tilde{U}(t)^N\ket{\psi}}^2.$ \\
Now, let $0<\eta<1.$ For two positive (diagonal) operators $\Lambda_1$ and $\Lambda_2$,  $N\in\mathbb{N},$ it holds true that
\begin{equation*}
    \norm{\left((\Lambda_1 + \Lambda_2)U\right)^N\ket{\psi}} \geq \norm{\left(\Lambda_1U\right)^N\ket{\psi}}, \qquad \ket{\psi}\in S_V
\end{equation*}
\begin{align*}
  \Longrightarrow \quad  \norm{\Tilde{U}^N_{\eta}\ket{\psi}}^2 = \norm{\left(\left(Q + \sqrt{1-\eta}P\right)U\right)^N\ket{\psi}}^2 \geq  \norm{(QU)^N\ket{\psi}}^2 = \norm{\Tilde{U}^N\ket{\psi}}^2.
\end{align*}
Consequently, 
\begin{align*}
R_N(\eta, t) = 1 - \norm{\Tilde{U}_{\eta}(t)^N\ket{\psi}}^2 \leq 1 - \norm{\Tilde{U}(t)^N\ket{\psi}}^2 \equiv R_N(t).
\end{align*}
The monotonicity of taking the limit proves the assertion.
\end{proof}
%
%
%
%
%
\begin{corollary}\label{cor:return_time_estimates}
Let $\tau_N(\eta,t)$ denote the expected $V-$ first return time of an initial state $\ket{\psi}\in S_V$ after $N$ weak measurements with coupling strength $\eta\in (0,1]$. Then, 
$\tau_N(\eta,t) \geq \tau_N(1,t) \equiv \tau_N(t)$. Clearly, also: $\tau_{\infty}(\eta, t) \geq \tau_{\infty}(1, t) \equiv \tau_{\infty}(t)$, where $\tau_{\infty}(\eta, t):= \lim_{N\to\infty} \tau_N(\eta, t).$  
\end{corollary}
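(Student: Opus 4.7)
The plan is to reduce everything to the survival norms $s_k(\eta):=\norm{\Tilde{U}_\eta(t)^k\ket{\psi}}^2$, since $R_k(\eta,t)=1-s_k(\eta)$ and the proof of Corollary~\ref{cor:return_prob_estimates} already furnishes the pointwise bound $s_k(\eta)\geq s_k(1)$ for every $k\geq 0$. The telescoped expression from Lemma~\ref{lem:return_prob_and_time} expresses $\tau_N(\eta,t)$ purely in terms of these norms, so the entire question is how to promote the pointwise inequality to the ratio that defines $\tau_N$.

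I would dispatch the limiting case first, since it is essentially immediate. By Corollary~\ref{cor:total_probability_weak}, both $R_\infty(\eta,t)$ and $R_\infty(1,t)$ equal one, so the denominator in~(\ref{eq:weak_first_return_time_N_telescoping}) disappears in the limit and
\begin{equation*}
\tau_\infty(\eta,t)=\sum_{k=0}^{\infty}s_k(\eta)\;\geq\;\sum_{k=0}^{\infty}s_k(1)=\tau_\infty(1,t),
\end{equation*}
the middle inequality being nothing more than the pointwise bound applied termwise. Absolute convergence is guaranteed by the decay $s_k(\eta)\to 0$ established inside the proof of Corollary~\ref{cor:total_probability_weak}.

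For finite $N$ the denominators no longer agree and more care is required. I would rewrite
\begin{equation*}
\tau_N(\eta,t)=N-\sum_{k=1}^{N-1}\frac{R_k(\eta,t)}{R_N(\eta,t)},
\end{equation*}
which turns the desired inequality into the first-order stochastic-domination statement $R_k(\eta,t)/R_N(\eta,t)\leq R_k(1,t)/R_N(1,t)$ for each $1\leq k\leq N-1$, equivalently that $N\mapsto R_N(\eta,t)/R_N(1,t)$ is non-decreasing. This cannot follow from the pointwise bound on $s_k$ in isolation---one can cook up abstract non-increasing sequences saturating $s_k(\eta)\geq s_k(1)$ for which the ratio inequality fails---so the specific operator structure $\Tilde{U}_\eta=(Q+\sqrt{1-\eta}P)U$ has to be brought in.

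The hard part is precisely this structural step. My first attempt would treat $\tau_N(\eta,t)$ as a smooth function of $\eta\in(0,1]$ and show $\partial_\eta\tau_N\leq 0$, from which $\tau_N(\eta,t)\geq\tau_N(1,t)$ would follow by integrating up to the endpoint $\eta=1$. The identity $\Tilde{U}_\eta^{\dagger}\Tilde{U}_\eta=U^\dagger(\mathds{1}-\eta P)U$ used in the proof of Corollary~\ref{cor:total_probability_weak} furnishes a Hellmann--Feynman-style expression for $\partial_\eta s_k(\eta)$ as a sum of matrix elements of $U^\dagger P U$ sandwiched between survival states $\Tilde{U}_\eta^j\ket{\psi}$; applying the quotient rule and regrouping by total index should let a Cauchy--Schwarz estimate pin down the sign. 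A backup route is induction on $N$ through the conditional-expectation update $\tau_N R_N=\tau_{N-1}R_{N-1}+Np_N$, although the Zeno formula~(\ref{eq:zeno_limit}) already shows that $p_n(\eta,t)$ itself is not monotone in $\eta$, so that path is appreciably less direct than the derivative argument.
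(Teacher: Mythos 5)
Your treatment of the $N\to\infty$ case is correct and essentially the standard one: with $s_k(\eta):=\norm{\Tilde{U}_{\eta}(t)^k\ket{\psi}}^2$ one has $\tau_\infty(\eta,t)=\sum_{k\ge 0}s_k(\eta)$, and the termwise bound $s_k(\eta)\ge s_k(1)$ from Corollary~\ref{cor:return_prob_estimates} finishes it. Your diagnosis of the finite-$N$ case is also correct, and it is worth stating plainly that it indicts the paper's own argument: the published proof uses \emph{only} the pointwise bound $s_k(\eta)\ge s_k(1)$ and substitutes it term by term into (\ref{eq:weak_first_return_time_N_telescoping}), which is exactly the route your abstract counterexample rules out. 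Concretely, the second inequality in the paper's chain replaces $-N\norm{\Tilde{U}_{\eta}(t)^N\ket{\psi}}^2$ by $-N\norm{\Tilde{U}(t)^N\ket{\psi}}^2$ in the numerator while keeping the denominator $1-\norm{\Tilde{U}_{\eta}(t)^N\ket{\psi}}^2$ fixed; since $s_N(\eta)\ge s_N(1)$ this \emph{increases} the numerator, so the claimed ``$\ge$'' runs the wrong way. (For the two-level system at $t=\pi/2$, $N=4$, $\eta=1/2$ the chain reads $2.67\ge 1.33\ge 2.67\ge 2$, and the middle step is false.) Even merging the last two steps does not help: the map $x\mapsto(A-Nx)/(1-x)$ with $A=\sum_{k=0}^{N-1}s_k(1)\le N$ is non-increasing, so it moves in the wrong direction under $s_N(1)\mapsto s_N(\eta)$.

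That said, your proposal does not close the gap either. The reformulation $\tau_N=N-\sum_{k=1}^{N-1}R_k/R_N$ correctly reduces the claim to the monotonicity of $N\mapsto R_N(\eta,t)/R_N(1,t)$, but the Hellmann--Feynman/derivative route is only a plan: you do not exhibit the sign of $\partial_\eta\tau_N$, and it is not evident that a Cauchy--Schwarz estimate on matrix elements of $U^\dagger PU$ between the survival states will control the cross terms that arise from the quotient rule. So the finite-$N$ inequality remains unproven in your write-up, just as it is (despite appearances) in the paper. To salvage the statement one genuinely needs the operator structure $\Tilde{U}_\eta=(Q+\sqrt{1-\eta}\,P)U$ beyond the norm bound --- for instance a proof that $R_N(\eta,t)R_k(1,t)\ge R_k(\eta,t)R_N(1,t)$ for $k\le N$ --- and this should be flagged as an open step rather than folded into a one-line monotonicity appeal.
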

\begin{proof}
We apply Corollary \ref{cor:return_prob_estimates} and note that for all $n\geq 1,$
\begin{align*}
    \norm{\Tilde{U}^{n}_{1}\ket{\psi}}^2 = \norm{\Tilde{U}^{n}\ket{\psi}}^2 &= \norm{(QU)^{n}\ket{\psi}}^2 \leq \norm{\left( (Q+\sqrt{1-\eta}P)U\right)^{n}\ket{\psi}}^2 \leq 1, \\
    \Longrightarrow 1 - \norm{\Tilde{U}^n\ket{\psi}}^2 &\geq  1 - \norm{\Tilde{U}^n_{\eta}\ket{\psi}}^2. 
\end{align*}
From this, and $\norm{\Tilde{U}_{\eta}^k\ket{\psi}}\geq \norm{\Tilde{U}^k\ket{\psi}}$ for all $k \geq 1$ (cf. Cor. \ref{cor:return_prob_estimates}), it follows that
\begin{align*}
    \tau_N(\eta, t) &= \frac{\sum_{k=0}^{N-1}\norm{\Tilde{U}_{\eta}(t)^k\ket{\psi}}^2 - N \norm{\Tilde{U}_{\eta}(t)^N\ket{\psi}}^2}{1-\norm{\Tilde{U}_{\eta}(t)^N\ket{\psi}}^2} \geq \frac{\sum_{k=0}^{N-1}\norm{\Tilde{U}(t)^k\ket{\psi}}^2 - N \norm{\Tilde{U}_{\eta}(t)^N\ket{\psi}}^2}{1-\norm{\Tilde{U}_{\eta}(t)^N\ket{\psi}}^2} \\
    &\geq 
    \frac{\sum_{k=0}^{N-1}\norm{\Tilde{U}(t)^k\ket{\psi}}^2 - N \norm{\Tilde{U}(t)^N\ket{\psi}}^2}{1-\norm{\Tilde{U}_{\eta}(t)^N\ket{\psi}}^2} 
    \geq \frac{\sum_{k=0}^{N-1}\norm{\Tilde{U}(t)^k\ket{\psi}}^2 - N \norm{\Tilde{U}(t)^N\ket{\psi}}^2}{1-\norm{\Tilde{U}(t)^N\ket{\psi}}^2} \overset{\text{def.}}{=} \tau_N(t).
\end{align*}
The monotonicity of taking the limit proves the assertion.
\end{proof}
\begin{lemma}\label{lem:one_over_eta}
Let $\ket{\psi}\in S_V$. For $t=0$ and for all $\eta\in (0,1]$, the expected return time $\tau_{N}(\eta, t=0)$ is given by:
\begin{equation*}\label{eq:one_over_eta_law_appendix}
    \tau_N(\eta,0)=\frac{1}{\eta} - \frac{N (1-\eta)^N}{1-(1-\eta)^N}.
\end{equation*}
Especially, $\tau_{\infty}(\eta,0)=\frac{1}{\eta}$. The rate, of which $\tau_N(\eta,0)$ converges point wise to $ \frac{1}{\eta}$ is $c(\eta)= 1-\eta.$
\end{lemma}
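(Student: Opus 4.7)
The plan is to evaluate everything at $t=0$, where $U(0)=\idty$ simplifies the survival operator enough that all norms reduce to a geometric sequence.

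First I would observe that at $t=0$ the unitary evolution is trivial, so the survival operator becomes $\Tilde{U}_{\eta}(0)=Q_{\eta}=Q+\sqrt{1-\eta}\,P$. For an initial state $\ket{\psi}\in S_V$, this means $P\ket{\psi}=\ket{\psi}$ and $Q\ket{\psi}=0$, hence $\Tilde{U}_{\eta}(0)\ket{\psi}=\sqrt{1-\eta}\,\ket{\psi}$. Iterating gives $\Tilde{U}_{\eta}(0)^k\ket{\psi}=(1-\eta)^{k/2}\ket{\psi}$, so
\begin{equation*}
\norm{\Tilde{U}_{\eta}(0)^k\ket{\psi}}^2=(1-\eta)^k.
\end{equation*}

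Next I would plug this into the telescoping formula from Lemma \ref{lem:return_prob_and_time}. The numerator contains the geometric sum $\sum_{k=0}^{N-1}(1-\eta)^k=\frac{1-(1-\eta)^N}{\eta}$ (which is valid for every $\eta\in(0,1]$, interpreting the $\eta=1$ case directly). Substituting yields
\begin{equation*}
\tau_N(\eta,0)=\frac{\frac{1-(1-\eta)^N}{\eta}-N(1-\eta)^N}{1-(1-\eta)^N}=\frac{1}{\eta}-\frac{N(1-\eta)^N}{1-(1-\eta)^N},
\end{equation*}
which is the claimed identity. For $\eta=1$ the second term is $0$ and $\tau_N(1,0)=1$, consistent with the strong Zeno limit discussed in the text.

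Finally, for the limit and convergence rate I would take $N\to\infty$. For any $\eta\in(0,1)$ we have $(1-\eta)^N\to 0$ and the denominator $1-(1-\eta)^N\to 1$, so the correction term $N(1-\eta)^N/(1-(1-\eta)^N)$ vanishes; hence $\tau_{\infty}(\eta,0)=\frac{1}{\eta}$. The pointwise rate follows from the observation that, for fixed $\eta$, the correction behaves asymptotically like $N(1-\eta)^N$ (since the denominator tends to $1$), which decays geometrically with ratio $c(\eta)=1-\eta$. There is no real obstacle here — the computation is essentially a direct substitution — the only thing to be careful about is handling the boundary case $\eta=1$ separately so that one does not divide by zero in the geometric sum.
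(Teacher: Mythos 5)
Your proposal is correct and follows essentially the same route as the paper's proof: reduce $\Tilde{U}_{\eta}(0)^k\ket{\psi}$ to $(1-\eta)^{k/2}\ket{\psi}$, substitute into the telescoping formula of Lemma \ref{lem:return_prob_and_time}, and apply the geometric sum. The only cosmetic difference is that the paper computes the operator power $\Tilde{\mathds{1}}_{\eta}^{k}=Q+(1-\eta)^{k/2}P$ explicitly and verifies the convergence rate via the limit of successive error ratios, whereas you act directly on the eigenvector and read off the geometric decay; both are sound.
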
 
\begin{proof}
For $t=0$, $U(0)=\mathds{1}$. As mentioned in the text, if $t=0$, clearly $ \tau_N(1, 0)= 1.$ So let $\eta \in (0,1)$.
Since there is no evolution at all,  $\Tilde{U}_{\eta}\equiv \Tilde{\mathds{1}}_{\eta}  =\left( (Q+\sqrt{1-\eta}P)\mathds{1}\right)$, which implies $\Tilde{\mathds{1}}_{\eta}^k =Q+(\sqrt{1-\eta})^kP$ due to $P, Q$ being projection operators. Then, for all $\ket{\psi}\in S_V$, $\norm{\Tilde{\mathds{1}}_{\eta}^k\ket{\psi}}^2 = (1-\eta)^k.$ Inserting this in Eq. (\ref{eq:weak_first_return_time_N_telescoping}), it follows
\begin{equation*}
    \tau_{N}(\eta,0)= \frac{\sum_{k=0}^{N-1}(1-\eta)^k - N (1-\eta)^N}{1-(1-\eta)^N}. 
\end{equation*}
Since $|1-\eta|<1$, we can simply use the geometric sum $\sum_{k=0}^{N-1}q^k = \frac{1-q^N}{1-q}$ ($|q|<1$), from which we get
\begin{align*}
    \tau_{N}(\eta,t=0) &= \frac{\frac{1-(1-\eta)^N}{1-(1-\eta)} -N (1-\eta)^N}{1-(1-\eta)^N} 
    = \frac{1}{\eta} - \frac{N (1-\eta)^N}{1-(1-\eta)^N}.
\end{align*}
Taking the limit $N\to\infty$ yields $\tau_{\infty}(\eta,0)=\frac{1}{\eta}.$
%

The rate of convergence is defined by  $c(\eta):= \lim_{N\to\infty}\frac{\vert \tau_{N+1}(\eta,0) - \tau_{\infty}(\eta,0)\vert}{\vert \tau_{N}(\eta,0) - \tau_{\infty}(\eta,0)\vert}$. Simple computation yields
\begin{align*}
    c(\eta) &= \lim_{N\to\infty}\frac{\left\vert \frac{(N+1)(1-\eta)^{N+1}}{1-(1-\eta)^{N+1}}\right\vert}{\left\vert  \frac{N(1-\eta)^{N}}{1-(1-\eta)^{N}}\right\vert} 
    = \lim_{N\to\infty} \frac{(N+1)(1-\eta)^{N+1}(1-(1-\eta)^N)}{N(1-\eta)^N \left(1-(1-\eta)^{N+1}\right)} \\
    &= \lim_{N\to\infty} \frac{\left(1-\eta+\frac{1-\eta}{N}\right)\left(1-(1-\eta)^N \right)}{1-(1-\eta)^{N+1}} 
    = 1-\eta.
\end{align*}
\end{proof}
\section{Details of Proofs}\label{appendix:proofs}
\subsection{Lemma \ref{lem:return_prob_and_time}}\label{proof:lem_return_prob_and_time}
In order to show Eq. (\ref{eq:pseudo_orthogoanl_decomp}), we start by computing the left side of (\ref{eq:pseudo_orthogoanl_decomp}) and conclude the right. 
\begin{align*}
     \Hat{a}^{\eta}_n + \Tilde{U}^{n}_{\eta} P &=  P_{\eta} U \Tilde{U}^{n-1}_{\eta} P +  \Tilde{U}^{n}_{\eta} P
     = P_{\eta} U \Tilde{U}^{n-1}_{\eta} P +  \Tilde{U}_{\eta} \Tilde{U}^{n-1}_{\eta} P
     = \left( P_{\eta} U +  \Tilde{U}_{\eta} \right) \Tilde{U}^{n-1}_{\eta} P \\
      &= \left( P_{\eta} U +  Q_{\eta} U \right) \Tilde{U}^{n-1}_{\eta} P 
        = \left( P_{\eta}  +  Q_{\eta} \right) U \Tilde{U}^{n-1}_{\eta} P
          = \left( \sqrt{\eta} P  +  (\mathds{1} - P ) + \sqrt{1-\eta} P \right) U \Tilde{U}^{n-1}_{\eta} P \\
          &= \left(\mathds{1} + f(\eta) P \right) U \Tilde{U}^{n-1}_{\eta} P,
\end{align*}
where 
\begin{equation}
f:(0,1]\to [0,\sqrt{2}-1], \qquad f(\eta):=\sqrt{\eta}+\sqrt{1-\eta}-1.    
\end{equation}
Moreover, equation (\ref{eq:single_return_prob}) is shown.
From the above pseudo-orthogonal decomposition, one writes
\begin{align*}
    \Hat{a}^{\eta}_n = U \Tilde{U}^{n-1}_{\eta} P - \Tilde{U}^{n}_{\eta} P + f(\eta) P U \Tilde{U}^{n-1}_{\eta} P.
\end{align*}
For $\ket{\psi}\in S_V,$ we have by construction $P\ket{\psi} = \ket{\psi}.$ Thus, 
\begin{align*}
    \Hat{a}^{\eta}_n\ket{\psi} = \underbrace{U \Tilde{U}^{n-1}_{\eta}\ket{\psi} - \Tilde{U}^{n}_{\eta}\ket{\psi}}_{=:a} + \underbrace{f(\eta) P U \Tilde{U}^{n-1}_{\eta}\ket{\psi}}_{=:b},
\end{align*}
so that $ \norm{\Hat{a}^{\eta}_n\ket{\psi}}^2$ is of the form  
\begin{align*}
    \norm{\Hat{a}^{\eta}_n\ket{\psi}}^2 = \norm{a+b}^2 = \norm{a}^2 + \norm{b}^2 + 2\Re \, \langle a ,b \rangle.
\end{align*}
Here, we have defined $a:=(U \Tilde{U}^{n-1}_{\eta} - \Tilde{U}^{n}_{\eta})\ket{\psi}$ and $b:= f(\eta) P U \Tilde{U}^{n-1}_{\eta}\ket{\psi}$. Hence, we compute separately the subsequent three expressions: $\norm{a}^2$, $\norm{b}^2$, $2\Re \,\langle a ,b \rangle$. \\
The results are:
\begin{enumerate}
    \item $ \norm{a}^2 = - \norm{\Tilde{U}^{n-1}_{\eta}\ket{\psi}}^2 + \norm{\Tilde{U}^{n}_{\eta}\ket{\psi}}^2 + 2 \, \frac{g(\eta)}{\eta}\norm{\Hat{a}^{\eta}_n\ket{\psi}}^2, $
    \item $ \norm{b}^2 = \frac{f^2(\eta)}{\eta} \norm{\Hat{a}^{\eta}_n\ket{\psi}}^2, $
    \item $ 2\Re \,\langle a ,b \rangle = \frac{2 g(\eta) f(\eta)}{\eta}  \norm{\Hat{a}^{\eta}_n\ket{\psi}}^2. $
\end{enumerate}
Details are presented in Section \ref{subsec:few_details}. 
For simplicity, we define $h(\eta):=\frac{1}{\eta}\left(2 g(\eta) + f^2(\eta)+ 2 g(\eta) f(\eta) \right) = 2.$
Then, we find by combining all three expressions:
\begin{align*} 
\norm{\Hat{a}^{\eta}_n\ket{\psi}}^2 &= - \norm{\Tilde{U}^{n-1}_{\eta}\ket{\psi}}^2 + \norm{\Tilde{U}^{n}_{\eta}\ket{\psi}}^2 + 2 \, \frac{g(\eta)}{\eta}\norm{\Hat{a}^{\eta}_n\ket{\psi}}^2 + \frac{f^2(\eta)}{\eta} \norm{\Hat{a}^{\eta}_n\ket{\psi}}^2 + \frac{2 g(\eta) f(\eta)}{\eta}  \norm{\Hat{a}^{\eta}_n\ket{\psi}}^2 \\
&=  - \norm{\Tilde{U}^{n-1}_{\eta}\ket{\psi}}^2 + \norm{\Tilde{U}^{n}_{\eta}\ket{\psi}}^2 + h(\eta)\, \norm{\Hat{a}^{\eta}_n\ket{\psi}}^2 \\
 &= - \norm{\Tilde{U}^{n-1}_{\eta}\ket{\psi}}^2 + \norm{\Tilde{U}^{n}_{\eta}\ket{\psi}}^2 + 2\, \norm{\Hat{a}^{\eta}_n\ket{\psi}}^2, \\
        \Longleftrightarrow \quad 
     \norm{\Hat{a}^{\eta}_n\ket{\psi}}^2 &= \norm{\Tilde{U}^{n-1}_{\eta}\ket{\psi}}^2 - \norm{\Tilde{U}^{n}_{\eta}\ket{\psi}}^2.
\end{align*}
\subsection{A few Details}\label{subsec:few_details}
\subsubsection{Details to \ref{proof:lem_return_prob_and_time}}
\begin{enumerate}
    \item Such an expression was already computed in \cite{bourgain2014quantum}. It corresponds to the $\eta = 1$ projection-valued measurement. However, we repeat the computation to follow closely what will change in the case of our weak measurement protocol.
    \begin{align*}
        \norm{a}^2 &= \norm{(U \Tilde{U}^{n-1}_{\eta} - \Tilde{U}^{n}_{\eta})\ket{\psi}}^2 
        = \norm{U \Tilde{U}^{n-1}_{\eta}\ket{\psi}}^2 + \norm{\Tilde{U}^{n}_{\eta}\ket{\psi}}^2 - 2 \Re \, \langle U \Tilde{U}^{n-1}_{\eta}\psi, \Tilde{U}^{n}_{\eta}\psi \rangle \\
        &= \norm{U}^2 \norm{\Tilde{U}^{n-1}_{\eta}\ket{\psi}}^2 + \norm{\Tilde{U}^{n}_{\eta}\ket{\psi}}^2 - 2 \Re \, \langle U \Tilde{U}^{n-1}_{\eta}\psi, \Tilde{U}_{\eta} \Tilde{U}^{n-1}_{\eta}\psi \rangle \\
          &= \norm{\Tilde{U}^{n-1}_{\eta}\ket{\psi}}^2 + \norm{\Tilde{U}^{n}_{\eta}\ket{\psi}}^2 - 2 \Re \, \langle U\psi', \Tilde{U}_{\eta} \psi' \rangle,
    \end{align*}
    where we have defined $\ket{\psi '}:= \Tilde{U}^{n-1}_{\eta}\ket{\psi}.$ \\
    We now compute $\Re \, \langle U\psi', \Tilde{U}_{\eta} \psi' \rangle$. Recall that $Q_{\eta} = Q + \sqrt{1-\eta}P = \mathds{1} - P + \sqrt{1-\eta}P  = \mathds{1} -(1-\sqrt{1-\eta})P=: \left( \mathds{1} - g(\eta) \right) P$ for $g: (0,1] \to (0,1], \, g(\eta):=1-\sqrt{1-\eta}.$
    \begin{align*}
         \langle U\psi', \Tilde{U}_{\eta} \psi' \rangle &= \langle U\psi', (Q_{\eta}U) \psi' \rangle = \langle \psi', U^{*}(Q_{\eta}U) \psi' \rangle \\
         &= \langle \psi', U^{*}(\mathds{1} - g(\eta) P)U \psi' \rangle 
         = \langle \psi', (U^{*}U - U^{*}g(\eta) PU) \psi' \rangle 
         = \langle \psi', (\mathds{1} - U^{*}g(\eta) PU) \psi' \rangle \\
         &= \langle \psi', \psi' \rangle - g(\eta) \langle \psi', U^{*} PU \psi' \rangle 
         = \langle \psi', \psi' \rangle - g(\eta) \langle \psi', U^{*} P^{*}PU \psi' \rangle \\
         &= \langle \psi', \psi' \rangle - g(\eta) \langle \psi', (PU)^{*}PU \psi' \rangle 
         = \langle \psi', \psi' \rangle - g(\eta) \langle PU \psi', PU \psi' \rangle \\
         &= \norm{\ket{\psi'}}^2 - g(\eta) \langle PU \psi', PU \psi' \rangle.
    \end{align*}
Since $\ket{\psi}\in S_V$ and $\eta>0,$ $\ket{\psi} = \frac{1}{\sqrt{\eta}} \left(\sqrt{\eta} P \ket{\psi}\right).$ Using this identity, one computes 
\begin{align*}
     \left\langle PU \psi', PU \psi' \right\rangle &=  \left\langle PU \Tilde{U}^{n-1}_{\eta}\psi , PU \Tilde{U}^{n-1}_{\eta}\psi \right\rangle \\
     &= \left\langle \frac{1}{\sqrt{\eta}} \sqrt{\eta} P U \Tilde{U}^{n-1}_{\eta}  P \psi , \frac{1}{\sqrt{\eta}} \sqrt{\eta} P U \Tilde{U}^{n-1}_{\eta}  P \psi \right\rangle \\
     &=  \frac{1}{\eta} \left\langle P_{\eta} U \Tilde{U}^{n-1}_{\eta} P \psi, P_{\eta} U \Tilde{U}^{n-1}_{\eta} P\psi \right\rangle 
     = \frac{1}{\eta}  \norm{\Hat{a}^{\eta}_n\ket{\psi}}^2. 
\end{align*}
From this, we summarize $  \langle U\psi', \Tilde{U}_{\eta} \psi' \rangle = \norm{\ket{\psi'}}^2 - g(\eta) \langle PU \psi', PU \psi' \rangle = \norm{\ket{\psi'}}^2 - \frac{g(\eta)}{\eta} \norm{\Hat{a}^{\eta}_n\ket{\psi}}^2.$ Thus, 
\begin{equation*}
    \Re \, \langle U\psi', \Tilde{U}_{\eta} \psi' \rangle = \Re \,\left( \norm{\ket{\psi'}}^2 - \frac{g(\eta)}{\eta}\norm{\Hat{a}^{\eta}_n\ket{\psi}}^2 \right) = \norm{\Tilde{U}^{n-1}_{\eta}\ket{\psi}}^2 - \frac{g(\eta)}{\eta}\norm{\Hat{a}^{\eta}_n\ket{\psi}}^2.
\end{equation*}
Combining all together, we finally write
\begin{align*}
    \norm{a}^2 &= \norm{\Tilde{U}^{n-1}_{\eta}\ket{\psi}}^2 + \norm{\Tilde{U}^{n}_{\eta}\ket{\psi}}^2 - 2 \Re \, \langle U\psi', \Tilde{U}_{\eta} \psi' \rangle \\
    &= \norm{\Tilde{U}^{n-1}_{\eta}\ket{\psi}}^2 + \norm{\Tilde{U}^{n}_{\eta}\ket{\psi}}^2 - 2 \left(  \norm{\Tilde{U}^{n-1}_{\eta}\ket{\psi}}^2 - \frac{g(\eta)}{\eta}\norm{\Hat{a}^{\eta}_n\ket{\psi}}^2 \right),  \\
    \Longleftrightarrow \quad 
    \norm{a}^2 &= - \norm{\Tilde{U}^{n-1}_{\eta}\ket{\psi}}^2 + \norm{\Tilde{U}^{n}_{\eta}\ket{\psi}}^2 + 2 \, \frac{g(\eta)}{\eta}\norm{\Hat{a}^{\eta}_n\ket{\psi}}^2.
\end{align*}
From the structure of $\norm{a}^2,$ we already see that the telescoping property is preserved. The expressions $\norm{b}^2$ and $2\Re \,\langle a ,b \rangle$ will erase the pre-factors in $\eta$.
\item Again, since $\ket{\psi}\in S_V$:
\begin{align*}
    b &= f(\eta) P U \Tilde{U}^{n-1}_{\eta}\ket{\psi} = f(\eta) \frac{1}{\sqrt{\eta}} \sqrt{\eta} P U \Tilde{U}^{n-1}_{\eta} P \ket{\psi} 
    = \frac{f(\eta)}{\sqrt{\eta}} \underbrace{P_{\eta} U \Tilde{U}^{n-1}_{\eta} P}_{=\Hat{a}_n^{\eta}}\ket{\psi} 
    = \frac{f(\eta)}{\sqrt{\eta}} \Hat{a}^{\eta}_n\ket{\psi}.
\end{align*}
Hence, we have 
$
\norm{b}^2 = \frac{f^2(\eta)}{\eta} \norm{\Hat{a}^{\eta}_n\ket{\psi}}^2.
$
\item The expression for $b$ was computed in the above step 2. Compute $a$ and combine:
\begin{align*}
    a &= U\Tilde{U}^{n-1}_{\eta} \ket{\psi} - \Tilde{U}^{n}_{\eta}\ket{\psi} = \left(U-\Tilde{U}_{\eta} \right) \Tilde{U}^{n-1}_{\eta}\ket{\psi}  
    = \left(\mathds{1}-Q_{\eta}\right) U \Tilde{U}^{n-1}_{\eta}\ket{\psi}
    = \underbrace{\left(\mathds{1}-Q_{\eta}\right)}_{= g(\eta) P} U \Tilde{U}^{n-1}_{\eta}P\ket{\psi} \\
    &= g(\eta) P  U \Tilde{U}^{n-1}_{\eta}P\ket{\psi} 
    = \frac{g(\eta)}{\sqrt{\eta}} \sqrt{\eta} P U \Tilde{U}^{n-1}_{\eta}P\ket{\psi} 
    = \frac{g(\eta)}{\sqrt{\eta}} \underbrace{P_{\eta} U \Tilde{U}^{n-1}_{\eta}P}_{=\Hat{a}^{\eta}_{n}}\ket{\psi} 
    = \frac{g(\eta)}{\sqrt{\eta}}\Hat{a}^{\eta}_n\ket{\psi}.
\end{align*}
Finally, 
\begin{align*}
    2 \Re\,\langle a, b \rangle &=  2 \Re\,\left\langle \frac{g(\eta)}{\sqrt{\eta}}\Hat{a}^{\eta}_n\psi,  \frac{f(\eta)}{\sqrt{\eta}} \Hat{a}^{\eta}_n\psi \right\rangle  = \frac{2 g(\eta) f(\eta)}{\eta} \Re\,\left\langle \Hat{a}^{\eta}_n\psi, \Hat{a}^{\eta}_n\psi \right\rangle  = \frac{2 g(\eta) f(\eta)}{\eta}  \norm{\Hat{a}^{\eta}_n\ket{\psi}}^2.
\end{align*}
\end{enumerate}
It is mentioned that the telescoping sum in the numerator of (\ref{eq:weak_first_return_time_N_telescoping}) is of the form (compare with Lemma A.3 in \cite{bourgain2014quantum})
\begin{align*}
    \sum_{n=1}^{N} n \cdot ( u_{n-1} - u_n ) &= u_0 - u_1 + 2u_1 - 2u_2 + 3u_2 - 3 u_3  + ... + N(u_{N-1}-u_N) 
    = \sum_{k=0}^{N-1} u_k - N \cdot u_N.
\end{align*}
\subsubsection{Details to the Integral Representation}\label{appendix:details_integral}
%
%
The following statement follows immediately from the findings of Bourgain et al. \cite{bourgain2014quantum} which were subsequently worked out in more generality by Grünbaum and Velazquez in \cite{grunbaum2018generalization}.
Expression (\ref{eq:aharonov_anandan_eta}) is derived in the following. A similar derivation of such a scalar-valued complex return integral was firstly presented in \cite{grunbaum2013recurrence} and can also be found for instance in \cite{yin2019large}.
We explicitly write out Eq. (\ref{eq:aharonov_anandan_eta}) and start by ($\Hat{a}_{n}^{\eta}(t)\equiv \Hat{a}_n^{\eta}$)
\begin{equation*}
\tau_{\infty}(\eta, t) = \sum_{n=1}^\infty n\,\norm{\Hat{a}_n^{\eta}(t)\ket{\psi}}^2. 
\end{equation*}
Note that for $\alpha\in\mathbb{R}$, we can write $n = -i\frac{\partial}{\partial\alpha} e^{in\alpha}\big|_{\alpha = 0}$ and similarly $i n e^{in\theta} = \frac{\partial}{\partial\theta}e^{in\theta}$. Moreover, we will use $\delta_{nm} = \frac{1}{2\pi}\int_{0}^{2\pi} d\theta\, e^{i\theta(n-m)}$ for $n,m\in\mathbb{N}$ and the generating function for the first return operator, i.e. $\Hat{a}(\eta, z,t):=\sum_{k=1}^{\infty}\Hat{a}_{k}^{\eta}(t) z^k,$ which can be written for $|z|=1$ as $\Hat{a}(\eta, e^{i\theta},t):=\sum_{k=1}^{\infty}\Hat{a}_{k}^{\eta}(t) e^{ik\theta},$  for $z:=e^{i\theta}, \theta\in [0,2\pi]$.
Then, 
\begin{align*}
\norm{\Hat{a}_n^{\eta}\ket{\psi}}^2 = \langle\psi,\Hat{a}_n^{\eta \dagger}\Hat{a}_n^{\eta}\psi\rangle = \langle\psi,\delta_{nm}\Hat{a}_n^{\eta \dagger}\Hat{a}_m^{\eta}\psi\rangle = \left\langle\psi, \frac{1}{2\pi}\int_{0}^{2\pi} d\theta\, e^{i\theta(n-m)} \Hat{a}_n^{\eta \dagger}\Hat{a}_m^{\eta}\psi\right\rangle. 
\end{align*}
Combining all, we can write
\begin{align*}
    \tau_{\infty}(\eta, t) &= \sum_{n=1}^\infty n\,\norm{\Hat{a}_n^{\eta}(t)\ket{\psi}}^2 = \sum_{n,m=1}^{\infty}\left(-i \frac{\partial}{\partial \alpha} e^{i n\alpha}\bigg|_{\alpha=0}\right)\left\langle\psi, \frac{1}{2\pi}\int_{0}^{2\pi} d\theta\, e^{i\theta(n-m)} \Hat{a}_n^{\eta \dagger}(t)\Hat{a}_m^{\eta}(t)\psi\right\rangle \\
    &= \sum_{n,m=1}^{\infty}\frac{1}{2\pi i} \int_{0}^{2\pi} d\theta\,\left\langle\psi,e^{-im\theta} \Hat{a}_m^{\eta \dagger}(t) \left(\frac{\partial}{\partial \alpha} e^{i n\alpha}\bigg|_{\alpha=0}\right) e^{in\theta}\Hat{a}_n^{\eta}(t)\psi\right\rangle \\
    &= \sum_{n=1}^{\infty}\frac{1}{2\pi i} \int_{0}^{2\pi} d\theta\,\left\langle\psi,\underbrace{\sum_{m=1}^{\infty}e^{-im\theta} \Hat{a}_m^{\eta \dagger}(t)}_{= \Hat{a}^{\dagger}(\eta, e^{i\theta},t)} \left(\frac{\partial}{\partial \alpha} e^{i n(\alpha+\theta)}\bigg|_{\alpha=0}\right)\Hat{a}_n^{\eta}(t)\psi\right\rangle. \\
    &= \frac{1}{2\pi i} \int_{0}^{2\pi} d\theta\,\left\langle\psi, \Hat{a}^{\dagger}(\eta, e^{i\theta},t) \sum_{n=1}^{\infty}\left(\frac{\partial}{\partial \alpha} e^{i n(\alpha+\theta)}\bigg|_{\alpha=0}\Hat{a}_n^{\eta}(t)\right)\psi\right\rangle.
\end{align*}
Now, we use 
\begin{align*}
    \sum_{n=1}^{\infty} \frac{\partial}{\partial\alpha}\left(e^{in(\alpha+\theta)}\Hat{a}_n^{\eta}(t)\right)\bigg|_{\alpha=0} &= \sum_{n=1}^{\infty} in \,e^{in(\alpha +\theta)}\Hat{a}_n^{\eta}(t)\bigg|_{\alpha=0} =  \sum_{n=1}^{\infty} in \,e^{in \theta}\Hat{a}_n^{\eta}() =  \sum_{n=1}^{\infty} \left(\frac{\partial}{\partial \theta } e^{in \theta}\right) \Hat{a}_n^{\eta}(t) \\
    &= \frac{\partial}{\partial\theta} \sum_{n=1}^{\infty}  e^{in \theta} \Hat{a}_n^{\eta}(t) \overset{\text{def.}}{=} \frac{\partial}{\partial\theta}\Hat{a}(\eta, e^{i\theta},t), 
\end{align*}
which concludes
\begin{equation*}
\tau_{\infty}(\eta, t) = \frac{1}{2\pi i} \int_{0}^{2\pi} d\theta\,\left\langle\psi, \Hat{a}^{\dagger}(\eta, e^{i\theta},t) \frac{\partial}{\partial\theta}\Hat{a}(\eta, e^{i\theta},t)\psi\right\rangle.
\end{equation*}
\section{Analytical Expression of the Two Level System}\label{appendix:symbolic_expression}
Even for an example of minimal dimension, the involved quantities become quickly infeasible to control analytically. 
The involved operators for a complex contour integral are
\begin{align*}
    \Tilde{U}_{\eta}(z,t)&= \frac{U_0(t,z,\eta)}{1 + z^2 \sqrt{1 - \eta} - 
     z (1 + \sqrt{1 - \eta}) \cos t} \\
     \\
    U_0(t,z,\eta)&:=\begin{pmatrix}
    1 - z \cos t  & - iz \sqrt{1-\eta} \sin t  \\
    -i z \sin t & 1 - z \sqrt{1-\eta} \cos t
    \end{pmatrix}
\end{align*}
and
\begin{align*}
 \Hat{a}(\eta, z, t) =    
 \begin{pmatrix}
     \frac{\sqrt{\eta} z \left(\cos t -z\right)}{1 + z^2 \sqrt{1 - \eta} - 
     z (1 + \sqrt{1 - \eta}) \cos t} & 0 \\
      0 & 0
 \end{pmatrix}
\end{align*}
However, evaluating the integral representation of $\tau_{\infty}(\eta, t)$ symbolically does not work, so we take the vectorization approach as described in \cite{grunbaum2018generalization, grunbaum2020quantum}. We define $g(t,x,\eta):= \frac{p(t,x,\eta)}{q(t,x,\eta)}$
with
\begin{align*}
    p(t,x,\eta) =\, & \eta (1 + 
   z (-2 \sqrt{1 - \eta} + 
      z (1 + 2 \sqrt{1 - \eta} + 2 z (-1 + \eta) - \eta)) + \\
      &(1 + z (-2 + z - 2 \sqrt{1 - \eta} + 2 z \sqrt{1 - \eta} - 
         z \eta)) \cos{2 t}) 
\end{align*}
\begin{align*}
    q(t,x,\eta) =\, & 2 (1 + z (-\sqrt{1 - \eta} - z^2 (1 - \eta)^\frac{3}{2} - 
      z \sqrt{1 - \eta} (-2 + \eta) + z^3 (-1 + \eta)^2) \\
   & - z (-1 + z^2 (-1 + \eta)) (-2 + \eta) \cos^2{t} + 
   z (-\sqrt{1 - \eta} + 2 z (1 + \sqrt{1 - \eta} - \eta) \\
   & + z \sqrt{1 - \eta} (z (-1 + \eta) - \eta)) \cos{2 t})
\end{align*}

\end{document}